\newtheorem{lem}{Lemma}
\newtheorem{defn}{Definition}
\newtheorem{theo}{Theorem}
\newtheorem{examp}{Example}
\newcommand{\verysmallskip}{\vspace{0.5ex}}
\renewcommand{\paragraph}[1]{\verysmallskip\noindent\textbf{#1.}}
\newcommand{\pos}{\ensuremath{\rho}}
\newcommand{\avgwin}{\overline{|W|}}
\newcommand{\indextime}{\ensuremath{t_J}}
\newcommand{\stock}{S}
\newcommand{\eqoverlap}{overlap}
\newcommand{\myasygraphics}[2][]{\IfFileExists{#2.pdf}{\includegraphics[#1]{#2}}{\ \wlog{File: File #2.pdf not found: }}}
\def\HiLi{\leavevmode\rlap{\hbox to \hsize{\color{gray!30}\leaders\hrule height .8\baselineskip depth .5ex\hfill}}}
\begin{document}

% ****************** TITLE ****************************************

\title{SWOOP: Top-k Similarity Joins over Set Streams}

% possible, but not really needed or used for PVLDB:
%\subtitle{[Extended Abstract]
%\titlenote{A full version of this paper is available as\textit{Author's Guide to Preparing ACM SIG Proceedings Using \LaTeX$2_\epsilon$\ and BibTeX} at \texttt{www.acm.org/eaddress.htm}}}

% ****************** AUTHORS **************************************

% You need the command \numberofauthors to handle the 'placement
% and alignment' of the authors beneath the title.
%
% For aesthetic reasons, we recommend 'three authors at a time'
% i.e. three 'name/affiliation blocks' be placed beneath the title.
%
% NOTE: You are NOT restricted in how many 'rows' of
% "name/affiliations" may appear. We just ask that you restrict
% the number of 'columns' to three.
%
% Because of the available 'opening page real-estate'
% we ask you to refrain from putting more than six authors
% (two rows with three columns) beneath the article title.
% More than six makes the first-page appear very cluttered indeed.
%
% Use the \alignauthor commands to handle the names
% and affiliations for an 'aesthetic maximum' of six authors.
% Add names, affiliations, addresses for
% the seventh etc. author(s) as the argument for the
% \additionalauthors command.
% These 'additional authors' will be output/set for you
% without further effort on your part as the last section in
% the body of your article BEFORE References or any Appendices.

\numberofauthors{3} %  in this sample file, there are a *total*
% of EIGHT authors. SIX appear on the 'first-page' (for formatting
% reasons) and the remaining two appear in the \additionalauthors section.

\author{
% You can go ahead and credit any number of authors here,
% e.g. one 'row of three' or two rows (consisting of one row of three
% and a second row of one, two or three).
%
% The command \alignauthor (no curly braces needed) should
% precede each author name, affiliation/snail-mail address and
% e-mail address. Additionally, tag each line of
% affiliation/address with \affaddr, and tag the
% e-mail address with \email.
%
% 1st. author
\alignauthor
Willi Mann\\
       \affaddr{Celonis SE}\\
       \affaddr{Munich, Germany}\\
       \email{w.mann@celonis.com}
% 2nd. author
\alignauthor
Nikolaus Augsten\\
       \affaddr{University of Salzburg}\\
       \affaddr{Salzburg, Austria}\\
       \email{nikolaus.augsten@sbg.ac.at}
% 3rd. author
\alignauthor 
Christian S.\ Jensen\\
       \affaddr{Aalborg University}\\
       \affaddr{Aalborg, Denmark}\\
       \email{csj@cs.aau.dk}
}
% There's nothing stopping you putting the seventh, eighth, etc.
% author on the opening page (as the 'third row') but we ask,
% for aesthetic reasons that you place these 'additional authors'
% in the \additional authors block, viz.
% \additionalauthors{Additional authors: John Smith (The Th{\o}rv\"{a}ld Group, {\texttt{jsmith@affiliation.org}}), Julius P.~Kumquat
% (The \raggedright{Kumquat} Consortium, {\small \texttt{jpkumquat@consortium.net}}), and Ahmet Sacan (Drexel University, {\small \texttt{ahmetdevel@gmail.com}})}
% \date{30 July 1999}
% Just remember to make sure that the TOTAL number of authors
% is the number that will appear on the first page PLUS the
% number that will appear in the \additionalauthors section.

\maketitle

\begin{abstract}

We provide efficient support for applications that aim to continuously find pairs of similar sets in rapid streams of sets, such as streams of tweets that consist of sets of words.
% A prototypical example setting is that of tweets. A tweet is a set of words, and Twitter emits about half a billion tweets per day. Our solution makes it possible to efficiently maintain the top-$k$ most similar tweets of a rapid Twitter stream.
%, e.g., to discover similar trends in two cities if the streams concern cities.
Using a sliding window model, the top-$k$ result changes as new sets enter the window and existing ones leave the window. Specifically, when a set arrives, it may form a new top-$k$ result pair with any set already in the window, and when a set leaves the window, all its pairings in the top-$k$ result must be replaced with other pairs. It is insufficient to maintain the $k$ most similar pairs since less similar pairs may become top-$k$ pairs. 
%A straightforward solution that pairs every new set with all sets in the window and keeps all pairs for maintaining the top-$k$ result is memory intensive and too slow.  

We propose SWOOP, a highly scalable stream join algorithm. Novel indexing techniques and sophisticated filters efficiently prune useless pairs as new sets enter the window. SWOOP incrementally maintains a provably minimal stock of similar pairs to update the top-$k$ result at any time. Empirical studies confirm that SWOOP is able to support stream rates that are orders of magnitude faster than the rates supported by existing approaches.
\end{abstract}

\section{Introduction}

\setlength{\emergencystretch}{3em}

The decreasing latency between the production of data, including humans and a broad range of sensors, and consumption of data renders streaming data increasingly prevalent. We consider streams where the elements of the streams are timestamped sets. Examples of such elements include tweets, email messages, or news articles that may be modeled as sets of words or $n$-grams; retail point-of-sale transactions represented as sets of goods; the clicks in user click-streams on a website; or social media content represented by the sets of users that liked or consumed that content. 

Such data streams may achieve very high frequencies. For example, Apple's Siri user base may issue billions of requests per month; that may be modeled as sets of words or other signatures. As another example, Twitter emits about half a billion tweets per day. To analyze such rapid data streams, new techniques must be developed that can keep up with high-rate streams, including their peak rates. As new data items arrive in a stream, they are queued and processed in FIFO order. When the processing cannot keep up with the stream rate, the queue grows and leads  to waiting times for all subsequent data items. Delays between an event and its visibility in the result are critical in situation when events require timely action, e.g., blocking a spamming email account~\cite{hariharan-uspatent-2014_detecting-bulk-email}.

We consider the problem of computing the top-$k$ join in rapid data streams of timestamped sets with a sliding window, i.e., we compute all pairs of sets that are among the top-$k$ most similar pairs in a time window of duration $w$.  As new data items arrive in the stream, the window moves, and the top-$k$ result must be updated. The top-$k$ join over streams may, for example, be used to recommend products based on recent point-of-sales transactions or click-stream data~\cite{montgomery2004modeling,Wang:2016:UCC:2858036.2858107}, to aggregate similar trending IPA requests to improve answer quality (e.g., by sharing successful interactions with users of similar requests), or to detect trends or to analyze information diffusion in streams of tweets~\cite{jung-icis-2018_twitter-information-diffusion}. 
% 
% Twitter Scenarios:
% \begin{itemize}
%   \item \emph{Spreading of news.}
%   \begin{itemize}
%     \item An epidemic model for news spreading on twitter. 2011. Use a window of size delta-t.
%     \item  Temporal dynamics of information diffusion in twitter: Modeling and experimentation. 2018. Analyse the diffusion of information. Consider a topic being defined by hash tags.
%     \item Information Diffusion in Online Social Networks: Models, Methods and Applications. WAIM 2015. Survey.
%     \item Information Diffusion between Twitter and Online Media. ICIS 2018 (A* conference). Discusses content spill-over which requires finding tweet rephrasing in news articles and news article passages in tweets.  
%   \end{itemize}
% \end{itemize}
%
% Old Twitter Scenarios:
%
% As an illustration, assume that we are interested in finding similar trending
% social media topics in New York City and London. Given two streams of tweets
% from these cities, we may represent each tweet as a set of words (e.g., after
% having performed stemming and stop-word removal). We then continuously join the
% most recent tweet sets from the two streams and maintain the $k$ most similar
% pairs, thus obtaining an overview of the most similar tweets from the two
% cities. Further processing may be applied to the top-$k$ result. For example,
% tag clouds may be created.

The top-$k$ join with a sliding window is useful also for static data, where the window covers all data elements whose timestamp falls within the window. A set pair is in the join result if it is among the top-$k$ in any interval of duration $w$. For example, consider an ERP system in which users scan and upload documents and where near-duplicate documents should be detected (e.g., to avoid paying  a bill twice). Each document is represented by a set of words resulting from an OCR process.  Computing all pairs of near-duplicate documents in the entire database will typically lead to many irrelevant result pairs since documents of interest are uploaded within a small time frame, e.g., some weeks. Therefore, only pairs within a given time window should be considered.  
% A second scenario is to detect conference papers that have later been extended
% into a journal paper, where a paper is represented by the set of authors and
% words in the title or abstract. Since extended versions of a conference paper
% are typically published within a few years, only pairs of papers published
% within some given time range should be considered. 

% A number of techniques have been proposed to compute pairs of similar sets,
% i.e., a set similarity join. Often, set similarity is assessed using some
% normalized set overlap measure (e.g., Jaccard, Cosine, or Dice), and finding a
% good similarity threshold is difficult. This problem is addressed by top-$k$
% queries, which return the top-$k$ items with the highest score and do not
% require any knowledge about the similarity distribution in the dataset.

We model a \emph{stream} as a sequence of $(\mathit{set}, \mathit{timestamp})$ pairs with monotonically increasing timestamps.  Only set pairs that are covered by a sliding time window $W$ of duration $w$ are considered. As the window slides over the stream, newly arriving sets become part of the window, and sets expire as they get older than the window duration. The top-$k$ join result must be kept up-to-date when time passes and such changes occur. 
Maintaining the join result poses two main challenges. (1) \emph{Candidate generation:} New sets that enter the sliding window may form a pair with any of the existing sets in the window. (2) \emph{Result expiration:} When sets expire, all their pairings become invalid; expired pairs among the \mbox{top-$k$} must be removed, and replacements must be found to maintain a correct join result. We next discuss these challenges in detail.

\emph{Candidate generation:} A new set that enters the window may form a pair with any of the $|W|$ sets in sliding window $W$. In rapid streams, the sliding window may contain hundreds of thousands of sets, so computing the similarity between each new set and all sets in the window does not scale to fast stream rates. Well known similarity join techniques for static set collections rely on inverted list indices~\cite{RBYMRS07,boge12,maau14,wang-pvldb-2017,xiao-www-08} that store a posting list of candidate sets for each token (or for each signature~\cite{deng-pvldb-2015}). Many techniques used in static scenarios, where all sets are known up front, cannot be used for streams, e.g., we cannot order tokens by their frequency or process and index sets in non-decreasing size order. Further, an index for streams must remove expired sets, which is expensive in indexes for static data. Finally, core technologies like the prefix filter~\cite{SCVGRK06} that are leveraged in this context use a threshold, whereas our scenario has no threshold because a top-$k$ result is required.

% We are not aware of any previous solutions to this problem. Morales et al.~\cite{de2016streaming} deal with set steams, but they use a fixed threshold to compute the join and do not consider sliding windows. 

A top-$k$ join algorithm over a static collection of sets is proposed by Xiao et al.~\cite{conf/icde/XiaoWLS09}. A fundamental assumption of this approach, which is leveraged for pruning and index construction, is that all sets are known up front. There is no obvious way to adapt the static top-$k$ join to our dynamic setting with frequent new and expiring sets. Reevaluating the static top-$k$ join each time the sliding window changes does not scale to frequent changes, as we show in our empirical evaluation. Note that an approximate algorithm that processes updates in batches may introduce a large error: (a) Each new set in the window can form $|W|$ pairs that are more similar than all  pairs in the previous window, therefore invalidating the previous top-$k$ result. (b) Relevant pairs may never appear together in a window when the window is moved in batches; and increasing the window duration to $w'>w$ such that both the old window and the batch are covered does not solve the problem. 
 
\emph{Result expiration.} As time passes, sets leave the sliding window and expire. When a set expires, all pairs in the top-$k$ result containing the expired set must be removed, and the invalidated pairs must be replaced by other pairs. Thus, it is insufficient to keep maintain the \mbox{top-$k$} pairs; rather, a \emph{stock} of other, less similar, valid pairs must be maintained. The total number of valid pairs is quadratic in the window size, so maintaining all such pairs is not efficient for large sliding windows or rapid streams. Only relevant pairs that may be required later to maintain a correct result should be stored. The state of the art solution is SCase~\cite{Shen2014}, which computes a so-called skyband to remove all irrelevant pairs.
%The state of the art for maintaining top-$k$ results in a sliding window is  SCase~\cite{Shen2014}, which maintains a skyband to avoid unnecessary pairs in the stock. 
However, the skyband for the stock must be recomputed from scratch for every new set in the stream. The stock stores $O(k\cdot|W|)$ pairs that must all be touched to recompute the skyband.  As a result, SCase does not scale to rapid streams, and new approaches are required.

We propose SWOOP for top-$k$ joins over streaming sets. SWOOP uses a novel candidate index to efficiently generate a small set of candidate pairs when new sets enter the sliding window. Each new set in the stream forms $O(k\cdot|W|)$ new pairs that may be relevant. The candidate index leverages a lower bound derived from the skyband to prune candidate pairs. The lower bound must be computed for each pair under consideration, and the lower bound changes with every new pair that is inserted into the stock. We propose a new technique that computes the skyband lower bound in logarithmic time, and stock updates do not incur any cost. Previous approaches require linear time to update the skyband lower bound~\cite{Shen2014}. The cost of updating the candidate index in response to new or expiring sets is independent of the index size.

%Indexing streams is challenging since the index must cope with updates at high rates. We propose highly efficient update strategies for our candidate index. 

%Clever filtering techniques are used to prune candidates, which reduces the number of similarity computations for each new set to a small fraction of the sets in the window.  
To efficiently maintain the stock of relevant pairs, we propose a novel technique to incrementally update the skyband; this technique does not depend on set similarity and is applicable to general stream join frameworks~\cite{Shen2014}.  
%Similar to SCase, SWOOP also maintains a skyband for the stock to keep the join result up-to-date when sets expire. However, in SWOOP, the skyband is maintained incrementally and is never recomputed from scratch. 
We show experimentally that this incremental stock update maintains the skyband for streams at rates that are up to ten times faster than the rates processed by the state-of-the-art solution SCase~\cite{Shen2014}. When combined with the candidate index, we achieve speed-ups of up to three orders of magnitude compared to an SCase-based approach.

% More specifically, we introduce an efficient so-called stream join framework that uses an inverted list index to produce candidate pairs. We introduce set-specific pruning techniques that are effective at pruning pairs that are guaranteed to never be part of the join result. We prove the correctness of the bounds that these pruning techniques employ. Using the pruning rules, we store the minimum collections of pairs that are required to maintain correct top-$k$ results, assuming that no further sets occur on the streams.

To characterize the similarity functions to which SWOOP is applicable, we define the concept of \emph{well-behaved} similarity function. All standard set similarity functions are well-behaved, including Overlap, Jaccard, Cosine, Dice, and Hamming~\cite{conf/icde/XiaoWLS09}.

Finally, we report on an extensive experimental study that offers insight into the efficiency of SWOOP compared to SCase~\cite{Shen2014}, static top-$k$ join~\cite{conf/icde/XiaoWLS09}, and a baseline. Most notably, we find that SWOOP scales much better with a growing number of sets in the sliding window, i.e., with the window duration and the stream rate.

In summary, we make the following key contributions:
\begin{itemize}[noitemsep,topsep=0pt,parsep=0pt,partopsep=0pt]
\item We present SWOOP, a novel algorithm for continuous \mbox{top-$k$} set similarity joins over streams. Two salient features of SWOOP are (1) the efficient generation of candidates when new sets enter the sliding window and (2) the incremental maintenance of a minimal stock to deal with expiring sets.
\item We introduce the concept of a well-behaved similarity function to accurately characterize the applicability of SWOOP.
\item We present a solution to contend with the absence of so-called token frequency maps in streams; we particularly target difficult streams with very skewed token distributions.
\item We report on empirical studies showing that SWOOP is capable of running orders 
%\todo{up to three orders -- double check} 
of magnitude faster than the state of the art.
\end{itemize}

\paragraph{Outline} Section~\ref{sec:problemdef} formulates the problem. Section~\ref{sec:indbaseline} introduces the stream join framework and a baseline solution. Section~\ref{sec:supsimfunctions}  defines well-behaved similarity functions. Section~\ref{sec:invlist} explains the candidate generation algorithm, including the handling of difficult datasets. Section~\ref{sec:skyband} covers the maintenance of the join result. Section~\ref{sec:experiments} reports on the empirical study. Section~\ref{sec:relwork} covers related work, and Section~\ref{sec:conclusion} concludes the paper.

\section{Problem Setting and Definition}
\label{sec:problemdef}

\paragraph{Basic Concepts}
A \emph{stream} $R$ is a sequence of two-tuples $(r_i, t_i)$, where $r_i$ is a set and $t_i$ is a timestamp. The $i$-th tuple in $R$ is denoted as $R_i$. The timestamp is monotonically increasing with the sequence number, i.e., for any two tuples $R_i=(r_i,t_i)$ and $R_j=(r_j,t_j)$, $i<j\Rightarrow t_i\leq t_j$.
A \emph{sliding window} $W$ of duration $w$ over stream $R$ contains all tuples of $R$ that are no older than $w$: $W=\{(r_i,t_i)\in R \mid \indextime - w < t_i \leq \indextime \}$, where $\indextime$ is the current time, also refered to as the \emph{index time}. The sets in the sliding window are called $\emph{valid}$.
Table~\ref{tab:notation} summarizes the notation.

\begin{table}[htbp]
 \centering
 \begin{tabular}{@{}c@{}c@{}}
 \begin{tabular}{@{}r|l@{}}
   $R$         & stream of timestamped sets \\
   $r_i$       & $i$-th set in stream  $R$ \\
   $t_i$       & timestamp of set $r_i$ \\
   $W$         & sliding window on $R$ \\
  $w$          & window duration (time) \\
  $\indextime$ & index time (also: current time) \\
\end{tabular}
\hspace{-2em}\,
  \begin{tabular}{@{}r|l@{}}
  $T$            & top-$k$ list \\
  $p$ & pair of sets \\
  $sim(p)$ & similarity of sets in $p$ \\
  $e_p$          & end time of pair $p$ \\

  $\uptau$       & set similarity threshold \\
  $l_r$          & cardinality of set $r$ \\
  %$set\_sim(r_i,r_j)$ & similarity of set pair $(r_i,r_j)$} \\ 
 \end{tabular}
 \end{tabular}
 \caption{Notation.}%
 \label{tab:notation}%
\end{table}

\paragraph{Window Join} To simplify the presentation, we discuss a self join scenario, where a stream is joined with itself; with minor modifications, all techniques presented in this paper extend to the general case of joining two different streams.

%
%In general, joins are defined between two different streams $R$ and $R'$.
%
% Extending SWOOP to $R\neq R'$  and extend to non-self joins in Appendix~\ref{sec:rr-join}. 

The top-$k$ set similarity join in sliding window $W$ returns the $k$ most similar pairs of sets from stream $R$ that are valid at the time the query is issued. Various functions have been proposed to assess the similarity between sets, e.g., Jaccard, Cosine, or Dice~\cite{conf/icde/XiaoWLS09}.

\begin{defn}[One-Time Top-$k$ Set Similarity Join]

 Given a sliding window $W$ over stream $R$ and a set similarity function $set\_sim(\cdot,\cdot)$, the \emph{one-time top-$k$ set similarity join} returns a list of $k$ set pairs $T = \langle p_1, p_2, \ldots , p_k \rangle$ from $R \times R$, such that (1) each pair $p_x$ is composed of valid sets, (2) $T$ is ordered descendingly according to $set\_sim(\cdot,\cdot)$, (3) for all $(r_i,r_j)\in T$, $i>j$, (4) for all $(r_i,r_j)\in T$, $set\_sim(r_i,r_j)>0$, (5) for all pairs $(s_i,s_j)$ of valid sets in $R \times R$ not in $T$,  $set\_sim(s_i,s_j)\leq  min_{(r_i,r_j)\in T} set\_sim(r_i,r_j)$. Finally, $T$ may contain fewer than $k$ pairs if fewer than $k$ pairs qualify.

\end{defn}
In the definition, condition~3 eliminates symmetric pairs such that only one of $(r_i,r_j)$ and $(r_j,r_i)$ is included in $T$.

The above join is a one-time query because it is executed once. We consider the continuous variant of the query that maintains an up-to-date result from when it is started until when it is stopped.
As time passes, sets leave window $W$ (expire), and new sets enter $W$. The join result $T$ must be kept up-to-date when such events occur. 
A set $r_i$ that enters window $W$ at time $t_i$ forms a new pair with all other sets $r_j$ in $W$, where $j<i$. A new pair enters the join result if it is sufficiently similar.  
When a set $r_i$ leaves $W$ and thus expires, all pairs that contain $r_i$ become invalid. Invalid pairs must be removed from $T$, and they must be replaced by valid pairs. In general, a pair $(r_i,r_j)$ is valid
from time $\max(t_i,t_j)$ (when the younger set enters the window) until time
$\min(t_i,t_j)+w$ (when the older set leaves the window). Since we only consider pairs $(r_i,r_j)$ with $i>j$, the validity interval is always $[t_i, t_j+w)$.

Valid pairs always have their start time in the sliding window (time period $\{t\mid\indextime-w< t \le \indextime\}$) and their end time in the so-called \emph{future window} (time period $\{t\mid\indextime < t \le \indextime+w\}$), i.e., their validity interval contains $\indextime$. Invalid pairs have both their start and end time in the sliding window. This is illustrated in Figure~\ref{fig:2windows}.

\begin{figure}[htbp]
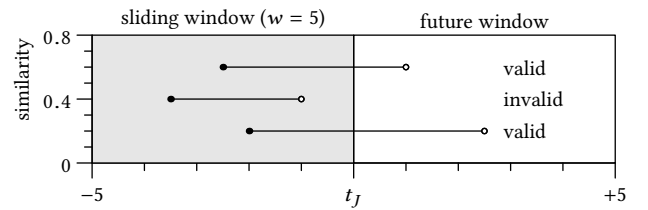

 \centering
  \myasygraphics{intervals/2windows}
 \caption{Valid and invalid pairs, sliding and future window.}
 \label{fig:2windows}
\end{figure}
\paragraph{Problem Statement}
Our goal is to solve the \emph{continuous top-$k$ set similarity join}  over rapid streams using a sliding window.

\section{Join Framework and Baseline}
\label{sec:indbaseline}

\paragraph{Stream join framework}  We introduce our stream join framework, illustrated in Figure~\ref{fig:algmodel}, and cover a baseline implementation of the framework. The framework comprises three constructs:

\begin{itemize}[noitemsep,topsep=0pt,parsep=0pt,partopsep=0pt]
\item \emph{Index time $\indextime$}  is the current time in the framework and defines the sliding window. All data structures in the framework must be up-to-date w.r.t.\ the index time.

 \item \emph{Stock $\stock$} maintains the join result $T$ at time $\indextime$ and additional, valid pairs to deal with expiring sets.

 \item \emph{Window $W$} stores all tuples in stream $R$ covered by the sliding window at time $\indextime$. $W$ is used when evaluating the similarity between pairs of sets and when expiring sets as the index time increases (i.e., the sliding window is advanced).

% \item \emph{Delete Queue $D$} stores the expiration times of all sets in $W$; $D$ is required to maintain the other data structures when the index time is increased (i.e., the sliding window is advanced).
\end{itemize}

The framework supports three operations: (i) $topk()$ retrieves the join result $T$ at index time $\indextime$; (ii) $set\_index\_time(t)$ sets the index time to $t\geq \indextime$; (iii) $insert(r_i,t_i)$ sets the index time to $t_i\geq \indextime$ and inserts a new set $r_i$ into the index. Sets must be inserted in the order of their appearance in $R$. The index time can never decrease.

\paragraph{Baseline} The baseline algorithm implements stock $\stock$ as a binary tree ordered by descending similarity of the pairs, i.e., the top-$k$ pairs are ranked first. Window $W$ is implemented as a FIFO queue that can be iterated and supports the usual peek/pop/push operations.
We discuss the three operations in the join framework.

(i) \emph{topk()} retrieves the join result $T$ at index time $\indextime$ by traversing the first $k$ pairs in stock $\stock$ (or $|\stock|$ pairs if $|\stock|<k$). No index update is required.

(ii) \emph{$set\_index\_time(t)$} updates the index time $\indextime$ and pops all sets from window $W$ that expire when the sliding window is advanced ($(r_i,t_i)\in W$ where $t_i\leq \indextime-w$). The corresponding entries $(r_x,r_i, \uptau, e_p)\in\stock$ with $e_p\le \indextime - w$ are deleted. 
%
% Since $r_x$ expires after $r_i$ in all pairs $(r_x,r_i)\in S$ (due to $x>i$), expiring sets can only appear in the position of $r_i$.

(iii) \emph{$insert(r_i,t_i)$}  first advances the sliding window to position $t_i$ and updates the affected data structures such that $W$ only contains valid pairs ($set\_index\_time(t_i)$). Next, the similarity of each pair $(r_i,r_j)\in\{r_i\}\times \{r_j\mid (r_j, t_j) \in W\}$ is computed; if $set\_sim(r_i,r_j)>0$, the pair is a candidate and is ranked in stock $\stock$. After the insert, $\stock$ contains the join result as of time $t_i$.

\begin{figure}[t]
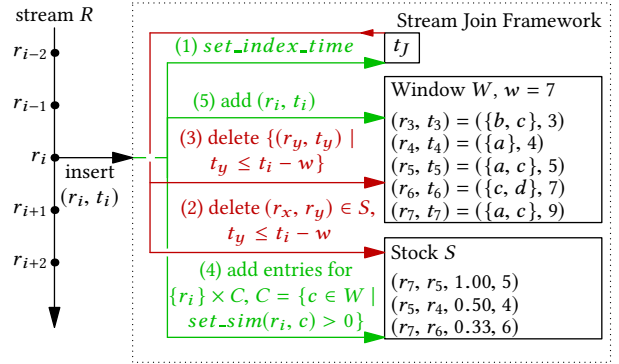

 \centering
  \myasygraphics{model/mainmodel}
 \caption{Inserting a new set into the stream join framework.}
 \label{fig:algmodel}
\end{figure}

Figure~\ref{fig:algmodel} illustrates $insert(r_i,t_i)$ for an incoming  two-tuple $(r_i,t_i)$ from stream $R$. Steps~1--3 reflect the call to $set\_index\_time(t_i)$, which (1) updates $\indextime$, (2) removes invalid pairs from stock $S$, and (3) removes expired sets from window $W$.  Step (4) adds the new pairs generated by $get\_candidates(r_i)$ to $S$. Step (5) adds set $r_i$ to $W$. 

% INFO: we fiddle here with the spacing between the emph and the remaining text. Furthermore, we had to fiddle with word spacing because Latex otherwise breaks the lines at strange positions
\emph{Complexity of baseline.}\hspace{0.8mm}{\newdimen\origiwspc\origiwspc=\fontdimen2\font\fontdimen2\font=0.5ex Stock $\stock$ is of size $O(|W|^2)$
% ($|\stock|=|W|(|W|-1)/2-z$ where $z$ is the number of pairs ignored due to zero overlap) 
and dominates the memory complexity. The insert operation runs in $O(|W| \log|W|)$ time since a new set must be paired with every other set in $W$, and the pairs must be inserted into binary tree $\stock$. Function $set\_index\_time$ scans the stock in time $O(|W|^2)$ for expiring sets; removing a set has cost $O(\log S)=O(\log W)$. Finally, $topk$ runs in optimal $O(k)$ time.\fontdimen2\font=\origiwspc}

\paragraph{Solution overview} The inefficiency of the baseline solution arises from the many candidate pairs generated for each incoming set and the quadratic size of the stock, which must be maintained under frequent changes. We address these issues in the following sections. 
The next section characterizes the scope of our solution.
Section~\ref{sec:invlist} introduces an efficient technique to generate candidates: using an index on tokens together with an upper and a lower bound, only a small fraction of the sets in window $W$ needs to be considered. 
Section~\ref{sec:skyband} proposes an efficient stock implementation that  stores only $O(k\cdot |W|)$ pairs, is maintained incrementally, and supports efficient lower bound queries.

\section{Supported Similarity Functions}
\label{sec:supsimfunctions}

Our solution works with the most common similarity functions, including Jaccard, Cosine, Dice, Overlap, and Hamming distance, but is not limited to these functions. We introduce the concept of a \emph{well-behaved set similarity function} to abstract the applicability from similarity functions and instead identify the essential properties that a similarity function must satisfy to work with our solution.

\begin{defn}[Well-behaved similarity function]
\label{def:sim-dev}
  A similarity function between two sets, $set\_sim(r,s)$, is \emph{well-behaved} iff there is a function $sim(l_r,l_s,o)=set\_sim(r,s)$ that only depends on the set lengths $l_r=|r|$, $l_s=|s|$, and the overlap $o=|r\cap s|$, and the following properties hold:
  \begin{enumerate}[noitemsep]
    \item $sim(l_r,l_s,0)=0$ \label{item:sim-def-first}
    \item $sim(l_r,l_s,o)=sim(l_s,l_r,o)$ (symmetry)
    \item $sim(l_r,l_s,o)$  monotonically increases with increasing overlap $o$ ($l_r,l_s$ are fixed) \label{item:sim-def-mono-overlap}
    \item $sim(l_r,l_s,o)$ monotonically increases with increasing overlap $o=l_s$, i.e. $s\subseteq r$ ($l_r$ is fixed) \label{item:sim-def-mono-subset}\label{item:sim-def-last}
    \item there is a function $\eqoverlap(l_r,l_s, \uptau)$ that computes the minimum required overlap $o$ such that $sim(l_r,l_s,o)\geq \uptau$ \label{item:sim-def-eqoverlap}
  \end{enumerate}
\end{defn}

 \begin{lem}
   Jaccard, Cosine, Dice, and Overlap similarity, and the Hamming distance are well-behaved set similarity functions.
 \end{lem}

 \begin{proof}
Table~\ref{tab:simdef} defines functions $sim(l_r,l_s,o)$ for the similarity and distance functions. Claims~\ref{item:sim-def-first}--\ref{item:sim-def-last} are easily verified using these definitions. Next, the table provides definitions of $\eqoverlap(l_r,l_s, \uptau)$, which is computed by solving the inequality $sim(l_r,l_s,o)\geq \uptau$ for $o$, from which claim~\ref{item:sim-def-eqoverlap} follows.
 \end{proof}

\begin{table}[htb]
 {\small
  \begin{tabular}{l|c|c|c}
   Similarity & $set\_sim(r,s)$                               & $sim(l_r,l_s,o)$                & $\eqoverlap(l_r,l_s,\uptau)$         \\\hline\hline
   Jaccard    & $\frac{|r\cap s|}{|r\cup s|}$            & $\frac{o}{l_r+ l_s-o}$          & $\frac{\uptau}{1+\uptau} (l_r+ l_s)$ \\\hline
   Cosine     & $\frac{|r\cap s|}{\sqrt{|r|\cdot  |s|}}$ & $\frac{o}{\sqrt{l_r\cdot l_s}}$ & $\uptau\sqrt{l_r\cdot  l_s}$         \\\hline
   Dice       & $\frac{2\cdot(|r\cap s|)}{|r|+ |s|}$     & $\frac{2\cdot o}{l_r+ l_s}$     & $\frac{\uptau(l_r+ l_s)}{2}$         \\\hline
   Overlap    & $|r\cap s|$                              & $o$                             & $\uptau$
   \\\hline
   Hamming    & $|(r\cup s) \setminus (r\cap s)|$        & $l_r+ l_s-2\cdot o$             & $\frac{l_r+ l_s-\uptau+1}{2}$
   \\
  \end{tabular}
 }
 \caption{Examples of well-behaved similarity functions.}
 \label{tab:simdef}
\end{table}

% CSJ: Some readers may state that we have yet to prove that the
% solution actually works for well-behaved similiarty set similarity
% functions and, perhaps, that we have yet to argue that this notiion
% is a precise characterization of applicability.

\section{The Candidate Index}
\label{sec:invlist}

% When a new set $r_i$ enters the sliding window, the baseline algorithm scans all sets $r_j$ in window $W$ and generates a candidate pair $(r_i,r_j)$ each time a pair is found with $set\_sim(r_i,r_j)>0$. This is inefficient and does not scale. 
% To remove this inefficiency, we add an inverted list index $I$ to the stream join framework and change the $get\_candidates$ function to use $I$. 

\subsection{Overview}
\label{sec:invlist:overview}

We discuss the efficient generation of candidates in SWOOP. Candidates are pairs that must be inserted into the stock. We use an inverted list index, the \emph{candidate index} $I$, to compute candidates. The keys in the index are tokens, and the values are lists of all valid sets in which the token appears. When a new set $r_i$ enters the sliding window, the lists of all tokens in $r_i$ are accessed to retrieve candidates, and index $I$ is updated. Efficient index updates are discussed in Section~\ref{sec:update_inverted_index}.

A naive use of an inverted list index offers little improvement over the baseline: only the set pairs with no overlap are avoided, and the use of the index tends to cause more cache misses than the baseline. In static scenarios, all sets are known up front and are preprocessed to support efficient indexing and effective candidate filters. For example, the tokens within a set are sorted by increasing frequency (to favor the prefix filter~\cite{SCVGRK06}), the sets are processed and indexed in non-decreasing length order (to support the length filter~\cite{RBYMRS07,maau14}), and sets need not be removed as the index size is bound by the data size. In our streaming scenario, we cannot preprocess the data, and our index must support efficient updates as new sets arrive and old sets expire. We propose candidate filters applicable to streams that effectively prune candidate sets which cannot contribute to the join result. 

\paragraph{Filters} The \emph{positional upper bound} filter introduced in Section~\ref{sec:positional-upper-bound} is based on the lookup position $\rho$ of a token in $r_i$ with the following reasoning: if a potential candidate $r_j$ is first encountered in the $\rho$-th list, there must be at least $\rho-1$ tokens in $r_i$ that do not exist in $r_j$.
The \emph{skyband lower bound} filter discussed in Section~\ref{sec:stock-lower-bound} is derived from the pairs that are already in the stock. A potential candidate pair is called \emph{irrelevant} and can be discarded if its not sufficiently similarity to be part of the top-$k$ result at any time in the future. We derive this minimum required similarity by inspecting the stock and taking into account the end time of the candidate pair under consideration.

\paragraph{Candidate Generation} In Section~\ref{sec:efficient-candidate-generation}, we devise a new candidate generation algorithm that uses our filters and the candidate index. 
Figure~\ref{fig:insertmodel} illustrates the algorithm for a newly inserted example set $r_7=\{a,c\}$ with timestamp $t_7=9$. The candidates are computed as follows. (1) A lookup of the tokens of $r_7$ in the candidate index $I$ returns two lists. (2) The lists are scanned from tail to head and produce so-called \emph{pre-candidates} (shaded in gray) until our filters tell us to stop (cropping). (3) We compute the similarity of each (deduplicated) pre-candidate pair and apply the skyband lower bound to prune irrelevant pairs.
The resulting candidates are collected in $C$. A candidate is a pair with its similarity and its end time. (4) Index $I$ is updated with the new tokens of set $r_i$ (dashed frame). (5) The stock is updated with the candidates in $C$ (dashed frame).

Section~\ref{sec:opttoken} deals with token orders and discusses the lookup order of tokens in the candidate index. 

\begin{figure}[htb]
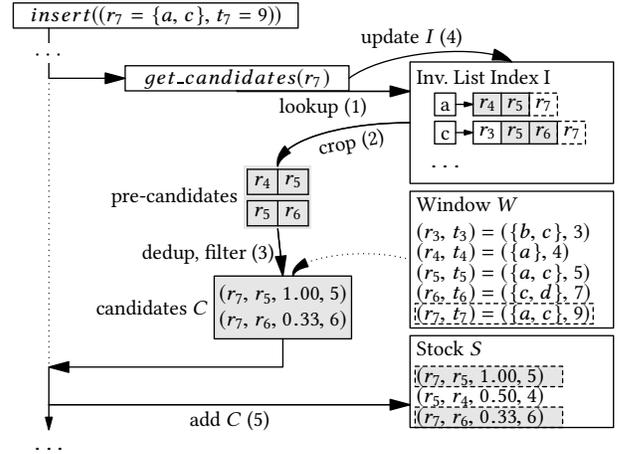

 \centering
  \myasygraphics{model/insertmodel}
 \caption{Efficient candidate generation.}
 \label{fig:insertmodel}
\end{figure}

\subsection{Updating the Candidate Index}
\label{sec:update_inverted_index}

Since only valid sets are indexed, index $I$ must be updated frequently. In particular, we must update $I$ when old sets expire and when new sets enter sliding window $W$.

We implement the candidate index with doubly-linked lists, the \emph{index lists}, and keep the sets in the lists ordered increasingly by their expiration time. This allows us to efficiently remove expiring sets from the heads of the lists. The list order comes for free: The timestamps of the new sets cannot decrease; thus, we append new sets to the tails of the relevant lists. A set $r$ is inserted/deleted in $O(|r|)$ time, independently of the list length. Figure~\ref{fig:invlistindex} illustrates the index update for an expiring set $r_2$ and a new set $r_6$.

\begin{figure}[htb]
 \centering
 \myasygraphics{invlistindex/invlistindex}
 \caption{Candidate index: insertion and deletion.}
 \label{fig:invlistindex}
\end{figure}

As a convenient side effect of the list order, we retrieve the candidate pairs  in sort order of their expiration time: A lookup of $r_i$ returns all lists $I(v)$ with tokens $v\in r_i$. Let some $r_j\in I(v)$ form a candidate pair $(r_i,r_j)$ with $r_i$. The expiration time of the candidate pair is $t_j+w$, i.e., it depends only on set $r_j$. Thus, the list order propagates to the candidate pairs.

\subsection{Positional Upper Bound}
\label{sec:positional-upper-bound}

We derive an upper bound on the set similarity that will be used to prune candidates during lookups in index $I$.

 \begin{theo}
   \label{th:positional-upper-bound}
   Given a well-behaved similarity function $set\_sim(\cdot,\cdot)$, sets $r$ and $s$. If at least $i$ tokens in $r$ do not exist in $s$, then the following upper bound on the similarity between $r$ and $s$ holds:
   \[
   set\_sim(r,s) \leq sim(|r|,|r|-i,|r|-i)
   \]
 \end{theo}

\begin{proof}
    We need to show that $sim(l_r,l_s,o)$ is maximum if $|s|=|r|-i$ and overlap $o=|r\cap s|=|r|-i$. W.l.o.g.\ assume $|s|\leq |r|$. For the case $s\subseteq r$, the similarity is maximized for the maximum size of $o=|s|$ (Def.~\ref{def:sim-dev}, claim (\ref{item:sim-def-mono-subset})). For given set lengths $|r|$ and $|s|$, the similarity is maximum if $s\subseteq r$ since $o<|s|$ in all other cases (Def.~\ref{def:sim-dev}, claim (\ref{item:sim-def-mono-overlap})). Thus, the maximum similarity is achieved when $|s|=o=|r|-i$.
\end{proof}

Consider a lookup of set $r$ in the index $I$. The lookup returns a list $I(v)$ for each token $v\in r$. Let $v_\pos$ be the $\pos$-th token of set $r$ that we look up in $I$; we call $\pos$ the lookup position. A set $s\in I(v_\pos)$ is \emph{new} if $\pos=1$ or $s\notin I(v_q)$ for $1\leq q < \pos$. For the new sets $s\in I(v_\pos)$, we know that there are at least $\pos-1$ tokens in $r$ that do not exist in $s$. Based on Theorem~\ref{th:positional-upper-bound} we derive the  following \emph{positional upper bound}:
\[ ub(|r|,\pos)=sim(|r|,|r|-\pos+1,|r|-\pos+1). \]

For any new set $s\in I(v_\pos)$, $set\_sim(r,s)\leq ub(|r|,\pos)$. This principle has been used before in the context of a specific set similarity function (e.g., Jaccard)~\cite{conf/icde/XiaoWLS09}. Compared to previous work, we provide a formal proof, do not require a global order of tokens, and generalize the bound to the class of well-behaved set similarity functions.

Figure~\ref{fig:maxsim} illustrates the upper bound for the Jaccard similarity on a set of length $|r|=5$.

\begin{figure}[htb]
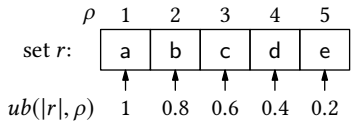

 \centering
 \myasygraphics{sets/maxthreshold}
 \caption{Positional upper bound for Jaccard.}
 \label{fig:maxsim}
\end{figure}

\subsection{Skyband Lower Bound}

\label{sec:stock-lower-bound}

We define the skyband lower bound that, together with the positional upper bound from the previous section, allows us to stop processing an index list early.  The skyband lower bound marks the boundary of the so-called \emph{skyband}, which is formed by the $k$ most similar pairs at any time $t> \indextime$ in the future; thereby, only pairs that exist at index time $\indextime$ are considered. The skyband is maintained in stock $S$. The red staircase functions in Figure~\ref{fig:obsolete_interval_simple} show the skyband lower bound for two example stocks.

  \begin{figure}[htb]
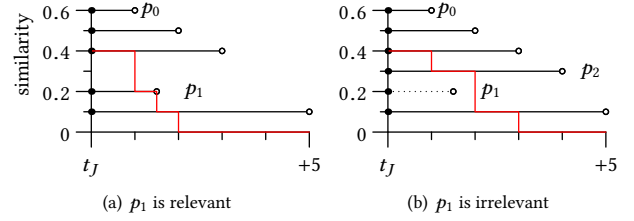

   \centering
    \subfigure[$p_1$ is relevant]{
    \myasygraphics{intervals/obsolete_interval_before_simple}%
    \label{fig:obsolete_interval_simple_a}
    }%
    \subfigure[$p_1$ is irrelevant]{
    \myasygraphics{intervals/obsolete_interval_simple}
    \label{fig:obsolete_interval_simple_b}
    }
   \caption{Skyband lower bound (red line) ($k=3$).}
   \label{fig:obsolete_interval_simple}
  \end{figure}

The \emph{skyband lower bound}, $lb(t,k)$, is defined as the similarity of the $k$-th pair at time $t> \indextime$ in stock $S$. The efficient computation of $lb(t,k)$ is discussed in Section~\ref{sec:efficient-lower-bound-computation}. We next introduce the concept of \emph{irrelevant pairs}, which need not be considered as candidates. Then we show how to detect irrelevant pairs using the lower bound.

\paragraph{Irrelevant Pairs} A pair $p = (r_i,r_j)$ is irrelevant if it is not part of the join result $T$ at index time $\indextime$ and will never become part of $T$. This is the case if for the remaining life time of the pair, $[\indextime,t_j+w)$, at least $k$ more similar pairs exist.

Irrelevant pairs are identified by considering their rank at their end time. The pair $p$ is irrelevant if the rank of $p$ at its  end time exceeds $k$, i.e., at least $k$ pairs exist that are better than $p$ for the whole remaining life time of $p$. Note that pairs inserted in the future can never increase the rank of $p$.

A pair may (a) be irrelevant before it is inserted into stock $S$ (then we can avoid inserting it), or (b) it may become irrelevant due to the insertion of another pair.

\begin{examp}
 Consider pair $p_1$ in Figure~\ref{fig:obsolete_interval_simple_a} with $sim(p_1)=0.2$ and end time $e_{p1}=\indextime+1.5$. For $k=3$, $p_1$ is relevant since the rank at its end time is $3\leq k$. The rank at index time $\indextime$ is $4$; the rank improves to $3$ at time $\indextime+1$ when $p_0$ becomes invalid. If we insert pair $p_2$, $p_1$ becomes irrelevant as illustrated in Figure~\ref{fig:obsolete_interval_simple_b}: the rank at its end time is now $4>k$.
 New pairs cannot improve the rank of pairs that are already in the stock; at best, they leave it unchanged. 

\end{examp}

\paragraph{Detecting Irrelevant Pairs} 
We use the skyband lower bound to identify irrelevant pairs. A pair $(r_i,r_j)$ with end time $t=t_j+w$ is irrelevant iff its similarity is below the lower bound at its end time $t$:
$(r_i,r_j)\text{ is irrelevant } \Leftrightarrow set\_sim(r_i,r_j)< lb(t,k)$.

\begin{lem}
 \label{lem:simkth} The skyband lower bound, $lb(t,k)$, is a non-increasing function in $t$.
\end{lem}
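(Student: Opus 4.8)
The plan is to reduce the claim to an elementary monotonicity property of order statistics under set inclusion. First I would make precise which pairs contribute to $lb(t,k)$: for a future time $t>\indextime$, only the pairs in stock $\stock$ that are still valid at $t$ are counted, and a pair $p$ with end time $e_p$ is valid at $t$ exactly when $t<e_p$ (the validity interval is half-open). Writing $P(t)=\{p\in\stock : e_p>t\}$ for this set, $lb(t,k)$ is by definition the similarity of the $k$-th most similar pair in $P(t)$, with the convention that $lb(t,k)=0$ when $|P(t)|<k$.

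The conceptual heart of the lemma is that $P(t)$ shrinks monotonically as $t$ grows. Every pair in the stock has its start time at or before $\indextime$, hence strictly before any $t>\indextime$; thus no pair ever enters $P(t)$ as $t$ increases, and pairs can only leave once their end time is passed. Formally, for $t_1<t_2$, if $e_p>t_2$ then $e_p>t_1$, so $P(t_2)\subseteq P(t_1)$. This captures the intuition, already stated when the skyband is introduced, that among the pairs present now the future can contain no more of them than the present.

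It then remains to argue that the $k$-th largest similarity value cannot increase when we pass from the larger multiset $P(t_1)$ to the subset $P(t_2)$. I would invoke the standard fact that the $k$-th largest element of a finite multiset $A$ equals $\max\{\min X : X\subseteq A,\ |X|=k\}$. Since every $k$-element subset of $P(t_2)$ is also a $k$-element subset of $P(t_1)$, this maximum ranges over a smaller collection in the case of $P(t_2)$ and can therefore only decrease; hence $lb(t_2,k)\le lb(t_1,k)$, which is exactly the non-increasing property.

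Finally I would dispatch the boundary case. All pairs placed in the stock have strictly positive similarity, since candidates with zero similarity are never inserted, so whenever $|P(t)|\ge k$ the value $lb(t,k)$ is positive, and as soon as the set drops below $k$ pairs the convention forces $lb(t,k)=0$; because $|P(t)|$ only decreases with $t$, such a transition goes from a positive value to $0$ and is consistent with non-increasing behavior. I do not expect a real obstacle here: the only subtleties worth stating carefully are the handling of ties in the order statistic, which the $\max\{\min X\}$ characterization sidesteps cleanly, and the edge convention for fewer than $k$ pairs.
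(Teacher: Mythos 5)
Your proof is correct, and it rests on the same key fact as the paper's own proof: every pair in the stock starts at or before the index time $\indextime$, so as $t$ advances, pairs can only leave the set under consideration, never enter it. Where you genuinely differ is in how that fact is converted into monotonicity of $lb(t,k)$. The paper argues locally, event by event: when the current $k$-th pair expires, whatever is promoted to position $k$ has similarity at most $\tau$, so the bound cannot rise --- in effect an informal induction over expiration events, which leaves ties, simultaneous expirations, and the fewer-than-$k$ case implicit. You argue globally: with $P(t)=\{p\in\stock \mid e_p>t\}$ you get $P(t_2)\subseteq P(t_1)$ for $t_1<t_2$, and then invoke monotonicity of the $k$-th order statistic under inclusion, via the characterization of the $k$-th largest element of a finite multiset $A$ as $\max\{\min X : X\subseteq A,\ |X|=k\}$. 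Your route buys rigor and robustness: ties are sidestepped by the max--min characterization, the $|P(t)|<k$ boundary is handled by an explicit convention, and no reasoning about the order of expiration events is needed. The paper's route buys a direct correspondence with the data structure --- the promotion of the next stock element to rank $k$ is exactly the mechanism exploited later in Lemma~\ref{lem:skybandpoints} and in the alignment of $\stock$ and $E$ --- at the cost of formality. Either argument establishes the lemma.
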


\begin{proof}
 All pairs start at or before the index time. The $k$-th pair $p=(r_i,r_j)\in T$ at index time has similarity $\tau=set\_sim(r_i,r_j)$ and end time $t=t_j+w$. When a pair $p\in T$ ends, a pair $p_i$ with similarity at most $\tau$ is promoted to position $k$ in $T$. Thus, the skyband lower bound cannot increase.
\end{proof}

\subsection{Efficient Candidate Generation}
\label{sec:efficient-candidate-generation}

We use the positional upper bound and the skyband lower bound to efficiently prune candidates during the lookup in index $I$, as illustrated in Figure~\ref{fig:listprocessing}. Recall that the positional upper bound, $ub(|r_i|,\pos)$, is constant for an index list $I(v_\pos)$, where $v_\pos\in r_i$ is the $\pos$-th token that we look up in the index (blue line in the figure). The skyband lower bound, $lb(t,k)$, on the other hand, depends on the time $t$ (red line segments).

\begin{figure}[htb]
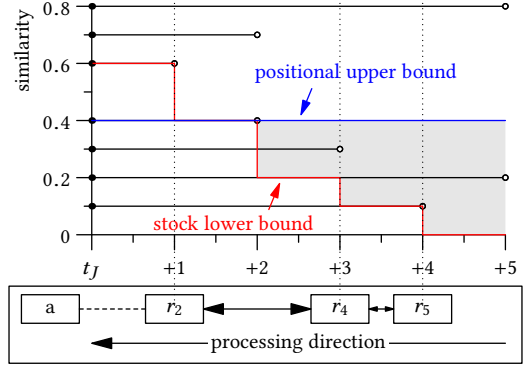

 \centering
  \myasygraphics{intervals/listprocessing}
 \caption{List processing with bounds, $k=3$.}
 \label{fig:listprocessing}
\end{figure}

The similarity of any pair $(r_i,r_j)$ formed with an entry $r_j$ in the index list $I(v_\pos)$ falls on or below the blue line. A pair is relevant iff its end point is on or above the red line. Thus, a pair with a set from list $I(v_\pos)$ is relevant iff its end point falls into the gray region in Figure~\ref{fig:listprocessing}.

More specifically, we employ the bounds as follows. We process the index list $I(v_\pos)$ from tail to head such that the end times $t=t_j+w$ of pairs $(r_i,r_j)$ formed with the sets $r_j\in I(v_\pos)$ do not increase (cf.\ Section~\ref{sec:update_inverted_index}). For each pair, we compute the lower bound at its end time $t$. We stop processing the list when having formed a pair with a lower bound above the upper bound, i.e., $lb(t,k)>ub(|r_i|,\pos)$. This is correct due to Lemma~\ref{lem:simkth}: the lower bounds of all remaining pairs will also exceed the upper bound threshold, i.e., no additional relevant pairs can be formed.

Algorithm~\ref{alg:newsets} generates candidate pairs for a new set $r_i$ using candidate index $I$. The basic structure is as follows (cf.\ Figure~\ref{fig:insertmodel}): for each token of the new set, $r_i[\rho]$, we probe $I$ to get a list of set IDs. The list is cropped, i.e., traversed from tail to head in line~\ref{algline:reverse} until the stopping condition based on our upper and lower bounds holds. The list elements are called pre-candidates and are stored with their lower bound in hashmap $M$. In the next step (lines \ref{algline:candbegin}--\ref{algline:candend}), we verify the pairs by computing their overlap to get the final set of candidates.  Finally, the new set $r_i$ is inserted into the index.

\begin{algorithm}[htb]
 \SetKwInOut{Inp}{Input}
 \SetKw{Break}{break}
 \SetKwProg{Fn}{Function}{}{end}
 \Inp{$r_i$: set to be looked up in candidate index $I$}
 \Fn{get\_candidates($r_i$)}{
 \tcp{get pre-candidate pairs from index}
  $M:$ empty candidate map, key:\,$r_j$,\,val:\,lower\_bound\nllabel{algline:getcandinit}\;
    \For{ $\pos$ in $1$ to $|r|$\nllabel{algline:newsetforloop}}{
    $upper\_bound\gets ub(|r_i|,\pos)$\;
   %$iL=I_{r[p_r]}$\nllabel{algline:newsetforloophead}\;% needed for replacement ``optorder'' algorithm
   %$\tauprobe \leftarrow ub_{probe}(|r|,p_r)$\nllabel{algline:probeupperbound}\;
   %\If{$\uptau_{probe}<\uptau_{min}$}{ \Break\; }
   \tcp{traverse one index list}
   \ForAll{$r_j$ in reverse order of $I(r_i[\pos])$\nllabel{algline:reverse}}{
    %$\uptau_{bound} \leftarrow get\_kth\_at(\stock, t_s+w).sim$\;
    $lower\_bound\gets lb(t_j+w,k)$\nllabel{algline:begininvok}\;
    \lIf{$lower\_bound>upper\_bound$\nllabel{algline:listupper}}{
     \Break
    }
    $M[r_j] \leftarrow lower\_bound$\nllabel{algline:end1invok}\;
   }
  }
  \tcp{compute candidates for insertion in stock}
  $C\gets \emptyset$\;
  \For{$(r_j,lower\_bound)$ in $M$\nllabel{algline:candbegin}}{
   $\uptau_o \leftarrow \eqoverlap(|r_i|,|r_j|,lower\_bound)$\;
   \If{$|r_i\cap r_j| \ge \uptau_o$}{
    $C\leftarrow C\cup \{(r_i,r_j,sim(|r_i|,|r_j|,|r_i\cap r_j|),t_j\!+\!w)\}$\nllabel{algline:candend}\;
   }
  }
  \tcp{update candidate index}
  \lFor{ $\pos$ in $1$ to $|r|$\nllabel{algline:newsetforloop2}}{
   $I(r_i[\pos])\leftarrow I(r_i[\pos])\circ(r_i)$
  }
  \Return{C}\;
 }
 \caption{Get candidates from index $I$.}
 \label{alg:newsets}
\end{algorithm}

A candidate pair $(r_i,r_j)$ is verified by checking $|r_i\cap r_j|\geq \uptau_o$. The overlap computation stops early when $\uptau_o$ cannot be reached. As shown by Mann et al.~\cite{mann2016} for threshold-based set similarity joins, stopping early has a major impact on the performance.

A pre-candidate $r_j$ may appear in multiple lists. Since $lower\_bound$ for $r_j$ does not change during a $get\_candidates()$ call, we look up the bound in $M$ and need not recompute it (line~\ref{algline:begininvok}).

\subsection{Optimized Token Processing Order}
\label{sec:opttoken}

Before we process a new set $r_i$, we order its tokens. This is required for the merge-like overlap computation. A well-known approach is to order sets by decreasing token frequency, i.e., rare tokens appear earlier in the sorted sets. This is useful in two ways: First, rare tokens have short lists in the index, which we leverage as discussed below. Second, the stop condition in the merge-like overlap computation improves with the number of mismatches, which are more likely for rare tokens.

Processing rare tokens (i.e., short lists) first when we retrieve candidates for $r_i$ has a substantial impact on the performance. This is due to our upper bound, which improves with the lookup position of a token. A tighter upper bound allows us to skip a longer section of the index list. Thus, we want to process long lists as late as possible and use the bound to skip large fractions of the long lists.

Non-streaming set similarity joins count the frequency of each token in a preprocessing step and establish the order up front. This is not possible in our setting since the sets arrive on a stream and are not known up front. Instead, we number each token when it first appears in the stream. Then, a new set is sorted in descending order of the first occurrence of its tokens, i.e., tokens that occur later are sorted lower in sort order. The idea is that frequent tokens are more likely to occur earlier in the stream than infrequent ones. 

In our experiments, we show that our ordering heuristic is effective if the token distribution is stable over time, i.e., a token appears with the same probability in each subsection of the stream. Unfortunately, some real world data does not satisfy this assumption. This leads to inefficiencies if we process the tokens in the order of their sort position (as in Algorithm~\ref{alg:newsets}, line~\ref{algline:newsetforloop}).
To deal with skewed token distributions, we process a new set $r_i$ as follows: We first retrieve the index lists of all tokens of $r_i$ and heapify the lists such that the shortest list is on top of the heap. We then pop the lists and process them until the heap is empty. This approach substitutes the order in Algorithm~\ref{alg:newsets}.

% \begin{algorithm}[htb]
%  \LinesNumberedHidden
%  \SetKwBlock{Indented}{\tiny}{\tiny}
%  \tcc{Insert after line~\ref{algline:getcandinit}}
%  \setcounter{AlgoLine}{0}
%  \tcp{Heap for lists; smallest list on top}
%  \ShowLn init(heap)\;
%  \ShowLn\For{$\pos$ in 1 to $|r|$}{
%   \ShowLn heap.push($I_{r[\pos]}$)\;
%  }\tiny
%   \Indented{\normalsize
%   \tcc{Replacement for loop in lines~\ref{algline:reverse} to~\ref{algline:end1invok} }
%   \setcounter{AlgoLine}{0}
%   \ShowLn\ForAll{$r_j$ in reverse order of heap.top()\nllabel{algline:optreverse}}{
%    \ShowLn\tcc{Lines~\ref{algline:begininvok} to \ref{algline:end1invok} from Algorithm~\ref{alg:newsets}}
%   }\tiny
%   }\Indented{\normalsize
%   \tcc{Insert after line~\ref{algline:end1invok} in Algorithm~\ref{alg:newsets}}
%   \setcounter{AlgoLine}{0}
%   \ShowLn heap.pop()\;
%   }
%  %}
%  \caption{Changes to Algorithm~\ref{alg:newsets} for optimizing list processing.}
%  \label{alg:optorder}
% \end{algorithm}

\section{Maintaining the Join Result}
\label{sec:skyband}

The stock $S$ maintains the join result. This includes ranking the $k$ most similar pairs at index time $t_J$ and keeping enough valid replacements for result pairs that leave the sliding window and thus become invalid. We require the following functionality.

\begin{itemize}[noitemsep,topsep=0pt,parsep=0pt,partopsep=0pt]

  \item $topk(k)$: Return the top-$k$ result at index time $\indextime$.
  
  \item $set\_index\_time(t), t\geq\indextime$: Increase the index time to $t$ and remove expiring pairs.
  
  \item $lb(t,k)$: Get the skyband lower bound at time $t$, i.e., the similarity of the $k$-th pair at time $t> \indextime$.
  
  \item $insert(C)$: Insert a collection of candidate pairs $C$ that all start at index time $\indextime$.
  
\end{itemize}

The $topk$ operation is trivial: it traverses the first $k$ elements of $S$ in sort order. The other operations are discussed below.

\paragraph{Stock Data Structure} 
For a pair $p=(r_i,r_j)$, the stocks stores a quadruple $(r_i,r_j,sim(p), e_p)$, where $sim(p)$ is the similarity of the pair and $e_p$ is its end time.
We implement $S$ as a binary search tree ordered by decreasing similarity (and lexicographically by descending end time, ascending $i$ and $j$ to break ties). 

In addition to search, two rank operations are supported in $O(\log |S|)$ time (cf.\ Section~\ref{sec:experiments}): (1) given an item $p\in S$, the rank of $p$ in the sort order is computed; (2) given rank $i$, the $i$-th item $p\in S$ in the sort order is returned. In our algorithms, we use the notation $S[i]$ to access the $i$-th item of $S$ in sort order.

\subsection{Incrementing the Index Time}
\label{sec:incrementing-the-index-time}

The $set\_index\_time$ operation advances the sliding window and removes expiring pairs from stock $\stock$. If pairs from the current join result $T \subseteq \stock$ are removed, they must be replaced by other pairs. The baseline algorithm keeps all valid pairs as potential replacements. As we will show, this is not necessary.

\paragraph{Minimal Stock}  We call stock $\stock$ correct if it contains all
pairs that may be required in the future to maintain $T$, i.e., all pairs that are relevant at index time $\indextime$ (cf.\ Section~\ref{sec:stock-lower-bound}). 
We call $\stock$ \emph{minimal} if it is correct and removing any pair makes it incorrect. 
The stock maintained by the baseline, which is correct but not minimal, is quadratic in the window size $|W|$. The minimal stock is linear in $|W|$.

\begin{lem}
  The size of a minimal stock $\stock$ is $O(k\cdot |W|)$.
\end{lem}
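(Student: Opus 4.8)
The plan is to show that a minimal stock can contain no irrelevant pairs, and then to count the relevant pairs directly by grouping them according to their end time.

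First I would argue that a minimal stock $\stock$ consists of \emph{exactly} the relevant pairs. By definition $\stock$ is correct, so it contains every relevant pair. If $\stock$ also contained some irrelevant pair $q$, then $\stock\setminus\{q\}$ would still contain all relevant pairs and hence still be correct, contradicting minimality. Therefore a minimal stock is a subset of the relevant pairs, and it suffices to bound the total number of relevant pairs by $O(k\cdot|W|)$.

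Next I would partition the relevant pairs by their end time. Recall that a pair $(r_i,r_j)$ with $i>j$ has end time $e_p=t_j+w$, which depends only on the older set $r_j$; hence the set of attainable end times is contained in $\{t_j+w \mid r_j\in W\}$ and has at most $|W|$ distinct values. The goal becomes to show that at most $k$ relevant pairs can share any fixed end time $e$; summing over the at most $|W|$ distinct end times then yields the claimed bound.

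The heart of the argument is this per-end-time bound, and this is where I expect the main work to lie. The key observation is that no pair inserted later can outrank a pair already in the stock and that pairs only expire over time, so the rank of a surviving pair can only improve; consequently the rank of $p$ over its remaining lifetime is smallest at its end time. This is exactly the monotonicity captured by Lemma~\ref{lem:simkth}: since $lb(t,k)$ is non-increasing in $t$, evaluating relevance at the end time is sufficient. A pair with end time $e$ is therefore relevant iff, at the instant just before $e$, it ranks among the $k$ most similar still-valid pairs (those with end time $\ge e$). At that instant the top-$k$ over all such pairs is a list of at most $k$ entries, and every relevant pair ending at $e$ must belong to it, so at most $k$ relevant pairs share the end time $e$. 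Combining the three steps gives $|\stock| \le k\cdot(\text{number of distinct end times}) \le k\cdot|W| = O(k\cdot|W|)$. The only delicate point is ties in similarity at the $k$-th position, but since the top-$k$ list has at most $k$ entries regardless of ties, the per-end-time bound of $k$ is unaffected.
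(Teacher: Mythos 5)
Your proof is correct, and it takes a recognizably different route from the paper's. The paper argues operationally, by counting replacements: when a set $r_j$ expires it invalidates at most $k$ pairs of the current result $T$ (since $|T|\le k$), each expiration therefore forces at most $k$ promotions into $T$, and since at most $|W|$ valid sets can expire, at most $k\cdot|W|$ replacement pairs (plus the current top-$k$) are ever needed. In other words, the paper partitions the stock by the event at which a pair \emph{enters} $T$, whereas you partition the relevant pairs by the event at which they \emph{expire}: all relevant pairs sharing an end time $e=t_j+w$ (equivalently, sharing the older set $r_j$) must rank among the top $k$ in the single snapshot just before $e$, so each of the at most $|W|$ end times accounts for at most $k$ relevant pairs. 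The two decompositions are dual views of the same counting idea, but yours buys some additional rigor that the paper leaves implicit: you prove explicitly that a minimal stock equals exactly the set of relevant pairs (the paper silently identifies the two), you reduce the per-group bound to the rank-at-end-time definition of relevance rather than to an informal picture of promotions, and you address the tie-breaking issue at the $k$-th position, which matters because the similarity-based test $set\_sim(r_i,r_j)<lb(t,k)$ alone would not cap the count at $k$ under ties. The paper's version, in turn, is shorter and connects directly to the worst-case scenario of Figure~\ref{fig:worstcase}; your per-end-time alignment is essentially the structural fact that reappears later in Lemma~\ref{lem:skybandpoints}.
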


\begin{proof}
 The deletion of a set $r_j$ invalidates at most $k$ pairs $(r_i,r_j)$ in $T$ since $|T|\le k$ ($|T|< k$ if fewer than $k$ pairs have non-zero similarity). The worst case is illustrated in Figure~\ref{fig:worstcase}, where $k=3$ pairs $(r_i,r_1)$ end at time $\indextime+1$ and must be replaced by the next $k$ pairs in the similarity order. Since only $|W|$ valid sets can expire, no more than $k\cdot |W|$ replacements are required.
 \end{proof}

 \begin{figure}[htb]
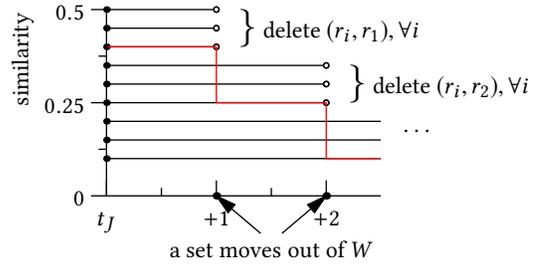

  \centering
  \myasygraphics{intervals/worstcase}
  \caption{Worst case, $k=3$.}
  \label{fig:worstcase}
 \end{figure}

\paragraph{End Time Index}
Function $set\_index\_time(t)$ removes all pairs $p\in\stock$ with end time $e_p$ smaller than $t$. The naive solution scans $\stock$, checks the end time of each pair, and removes expired pairs. For $n\leq |S|$ expired pairs, the runtime is $O(|\stock|+n\log |\stock|)$. This is too slow as the index time is potentially incremented by each new set in the stream.

We introduce the \emph{end time index} $E$ that maintains the same elements as stock $\stock$, but orders them by ascending end time (ascending similarity, descending $i$, $j$ for pair $p=(r_i,r_j)$). Like $\stock$, $E$ is implemented as a binary tree that supports rank operations in logarithmic time. Index $E$ is updated whenever $\stock$ is updated, thus $|E|=|S|$.

Our implementation of $set\_index\_time(t)$ scans the end time index only while the end time $e_p$ is below $t$. Then the scan stops, and the remaining pairs are not touched. Each scanned pair is removed. The removal of $n\leq |S|$ invalid pairs takes $O(n\log |\stock|)$ time. Since each pair can be removed only once, the worst case $n=|S|$ is infrequent, and the average complexity is $O(\log |\stock|)$.

\subsection{Efficient Lower Bound Computation}
\label{sec:efficient-lower-bound-computation}

The skyband lower bound $lb(t,k)$ (cf.\ Section~\ref{sec:stock-lower-bound}) is the similarity of the $k$-th pair in $S$ at some future time $t>\indextime$. It is used during candidate generation and is evaluated for each entry in the index lists until the stopping condition is reached.

A straightforward implementation scans $S$ and returns the $k$-th pair $p$ at
time $t$ that satisfies $e_p \ge t$. This takes $O(|S|)$ time, which is too
expensive since the lower bound needs to be computed for each pre-candidate.
We exploit the fact that $\stock$ is minimal and use the end time index $E$ to retrieve the $k$-th pair at time $t$. The following theorem establishes a connection between $E$ and $S$ that is leveraged for the efficient computation of the skyband lower bound.

% \begin{lem}
%  \label{lem:mink}
% %
%  If a stock $\stock$ is minimal, the rank of any pair in $\stock$ is at most $k$ at any time of its existence.
% %
% \end{lem}
% 
% \begin{proof}
%  Follows from the definition of a minimal $\stock$.
% \end{proof}

\begin{theo}
 \label{lem:skybandpoints}
Let $t\geq t_J$ be a timestamp, $p$ the pair in $E$ with the smallest timestamp such that $e_p\geq t$, and $v$ the rank of $p$ in endtime index $E$. If stock $S$ is minimal, then the $k$-th pair in $S$ at time $t$ is $S[k+v-1]$.
\end{theo}

\begin{proof}
 By induction on $v$. Pair $p=E[1]$ covers the interval $\indextime \leq t < e_p$ and is the first pair to end; in this interval, the $k$-th pair in $S$ is $S[k+v-1]=S[k]$. 
 Assumption: The $k$-th pair in $S$ during the interval $[t, e_p)$ is $S[k+v-1]$. 
 Note that $p$ is in the top-$k$; otherwise $p$ could be removed (which is not possible in a minimal stock). Assume unique end times in $E$: The pair $E[v+1]$ defines the next interval. Since $p$ is now invalid, the next element in the stock, $S[k+v]$, is promoted to become the $k$-th pair in $S$. 
 Now assume the general case of $n$ entries in $E$ with the same end time: $v$ is always the position of the first of these entries in $E$. The pair $E[v+n]$ defines the next interval, invalidating the former top-$k$ entries $E[v]$ to $E[v+n-1]$ and promoting $S[k+v-1+n]$ to rank $k$ in $S$. 
  \end{proof}

% \begin{algorithm}[htb]
%  \SetKwInOut{Inp}{Input}
%  \SetKw{Break}{break}
%  \SetKw{Return}{return}
%  \SetKwProg{Fn}{Function}{}{end}
%  \Inp{$t$: query time, $k$.}
%  \Fn{$lb(t,k)$}{
%   $v\leftarrow$ first rank in $E$ where $E[v] \ge t$\;
%   \lIf{$v+k-1\ge|\stock|$}{
%    \Return $0$%
%   }
%   \Return $sim(\stock[v+k-1])$\;
%  }
%  \caption{Skyband lower bound at time $t$ for given $k$.}
%  \label{alg:getkthat}
% \end{algorithm}

To compute $lb(t,k)$, we search $E$ for the smallest pair (in sort order)  with $e_p\geq t$ and retrieve its rank $v$. Operation $lb(t,k)$ is the similarity of the pair at position $v+k-1$ in $S$. %  (cf.\ Algorithm~\ref{alg:getkthat}). 
All these operations (searching $e_p$ in $E$, computing its rank) are logarithmic in $|S|=|E|$.

\begin{examp}
  Figure~\ref{fig:threshlookup} shows six pairs $p_0,\ldots,p_5$, stock $S$, end time index $E$, and the skyband lower bound for $k=3$ (red line). For the pairs in $S$, we show similarity and end time (e.g., $(0.4, 5)$ for $p_2$); for the pairs in $E$, we only show the end time ($5$ for $p_2$). $S$ and $E$ are ordered by similarity resp.\ end time. We shift the orders by $k-1$ positions such that $E[v]$ is aligned with $S[k+v-1]$ (gray bars). Note that the pairs in the bars define the steps of the skyband lower bound, e.g., the first bar defines the point $(0.4,1)$, where the first step ends. This is a result of Theorem~\ref{lem:skybandpoints} and holds if the stock is minimal. We compute $lb(t,k)$ for $t=2.5$: $p_3$ at position $v=3$ is the smallest pair in $E$ with end time $\geq t$; the aligned pair $S[v+k-1]$ has similarity $0.2$, which is the skyband lower bound at time $t=2.5$.

  \begin{figure}[t]
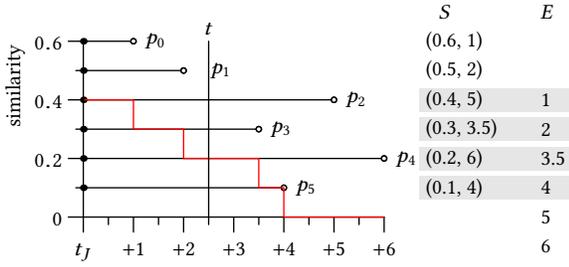

   \centering
    \myasygraphics{intervals/threshold_lookup}
   \caption{Threshold lookup at $t=+2.5$. $k=3$.}
   \label{fig:threshlookup}
  \end{figure}

\end{examp}

\subsection{Inserting New Pairs}

The insert operation adds a set of candidate pairs, $C$, to the stock. The challenge is to keep the stock minimal. New pairs  may turn out to be irrelevant (in which case they should not be inserted), or they may render other pairs irrelevant (which then must be removed). 

Assume we want to insert pair $p$ (dotted) into the stock in Figure~\ref{fig:irrelevant}. To check if $p$ is relevant, the rank at its end time $e_p$ must be at most $k$. The rank of $p$ is determined by the number of stock elements $p'$ that do not end before $p$ and are at least as similar, i.e., $e_{p} \leq e_{p'}$, $sim(p) \leq sim(p')$. There are $3$ such pairs ($p_2,p_3,p_5$, gray area); thus, $p$ is irrelevant (rank $4<k$ at end time). Note that inserting the irrelevant pair $p$ disrupts the alignment of $S$ and $E$ (gray horizontal bars) stated in  Theorem~\ref{lem:skybandpoints}.

\begin{figure}[htb]
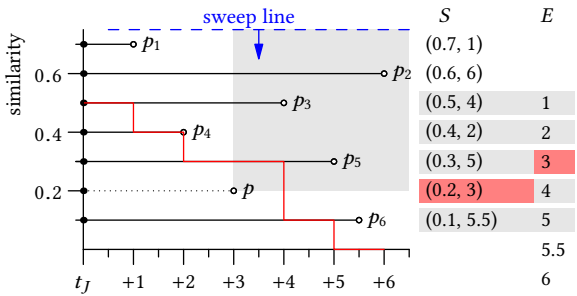

 \centering
  \myasygraphics{intervals/irrelevant}
 \caption{Relevant and irrelevant pairs, $k=3$.}
 \label{fig:irrelevant}
\end{figure}

\paragraph{Sweep Line Insertion} Let $p$ be the pair to be inserted. First, the relevance of $p$ must be checked. This is achieved using a sweep line algorithm that scans $S$ in sort order and counts all pairs $p'\in S$, $e_{p} \leq e_{p'}$, $sim(p) \leq sim(p')$ (gray area, Figure~\ref{fig:irrelevant}). If $p$ is irrelevant, it is rejected. Otherwise, $p$ is inserted, and all pairs $p''$, $e_{p} \geq e_{p''}$, $sim(p) \geq sim(p'')$ must be checked since they may have become irrelevant due to the insertion of $p$. For each pair $p''$, the  sweep line algorithm must be executed. Thus, the overall runtime is  $O(|S|^2)$.

\paragraph{Outline} We present our efficient insert algorithm in three steps. 
First, we present a \emph{cleanup} algorithm that uses end time index $E$ to remove all $i$ irrelevant pairs from stock $S$ in time $O(|S| + i \log |S|)$.  An insert algorithm that uses cleanup can add all candidates $C$ to the stock without any relevance checks and then remove all irrelevant pairs in one pass. This is a major improvement over the sweep line algorithm that is quadratic in $|S|$. 
Second, we optimize cleanup for the use with insert, where we know the candidate set $C$ up front.
Third, we present the efficient insert algorithm of SWOOP, which uses a merge approach and inserts pairs only if they are relevant. Intuitively, adding $C$ and cleaning the stock are interleaved.

% CSJ: Above, the insert (now edited out) took two arguments, but it
% only takes one where we defined it at the beginning of the section.

\paragraph{Cleanup} The cleanup algorithm presented next removes all irrelevant pairs from stock $S$ for a given $k$. The algorithm uses the end time index $E$ and the following property of non-minimal stocks.

\begin{lem}
 \label{lem:endirr}
 If $u$ is the position of the first irrelevant pair in $E$, $p=E[u]$, then  the position of $p$ in $\stock$ exceeds $u+k-1$: $p=S[v], v>u+k-1$.
\end{lem}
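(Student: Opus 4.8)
The plan is to bound the rank $v$ of $p=S[v]$ from below by partitioning the pairs that precede $p$ in stock $S$ according to whether they end before $p$ or not, and then to show that the ``ends before $p$'' part coincides exactly with the block of $u-1$ pairs that precede $p$ in $E$. Write $H$ for the set of pairs strictly above $p$ in the sort order of $S$, so $|H|=v-1$, and split $H=H_1\cup H_2$, where $H_1$ collects the pairs of $H$ with end time $\ge e_p$ and $H_2$ those with end time $<e_p$. Since $p$ is irrelevant, its rank at end time $e_p$ exceeds $k$, meaning at least $k$ pairs besides $p$ are at least as similar as $p$ and remain valid at $e_p$; under the sort order of $S$ these are exactly the elements of $H_1$, so $|H_1|\ge k$. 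It then remains to show that $H_2$ equals the block $\{E[1],\dots,E[u-1]\}$, which has size $u-1$; from this $v-1=|H_1|+|H_2|\ge k+(u-1)$, hence $v\ge u+k>u+k-1$, as required.

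The inclusion $H_2\subseteq\{E[1],\dots,E[u-1]\}$ is immediate, since every pair in $H_2$ ends strictly before $p$ and therefore sorts before $p$ in $E$. The reverse inclusion is the heart of the argument, and it is where I expect the difficulty to lie: I must rule out any pair $p'$ that precedes $p$ in $E$ yet fails to be more similar than $p$, i.e.\ with $e_{p'}<e_p$ and $sim(p')<sim(p)$. Such a $p'$ sorts before the first irrelevant pair $E[u]=p$ and is therefore \emph{relevant}. But every pair in $H_1$ satisfies $sim\ge sim(p)>sim(p')$ and $e\ge e_p>e_{p'}$, and $p$ itself also dominates $p'$; hence $H_1\cup\{p\}$ is a set of at least $k+1$ pairs, all distinct from $p'$, that are at least as similar as $p'$ and still valid at $e_{p'}$. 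This forces the rank of $p'$ at its own end time to be at least $k+2>k$, so $p'$ is irrelevant, contradicting its relevance. Therefore no such $p'$ exists, $\{E[1],\dots,E[u-1]\}\subseteq H_2$, and the two sets coincide.

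The only remaining work is book-keeping with the tie-breaking rules: the lexicographic orders declared for $S$ and $E$ must be consistent enough that ``strictly above $p$ in $S$'' and ``strictly before $p$ in $E$'' line up with the strict comparisons on similarity and end time used above, so that a pair with $e=e_p$ or $sim=sim(p)$ is counted on the intended side. I would discharge this by fixing the declared tie-breakers and checking the boundary cases; this is routine and does not disturb the counting. The single genuinely non-trivial step is the domination argument of the second paragraph, which crucially combines the irrelevance of $p$ (to produce the $k$ dominators filling $H_1$) with the relevance of every pair before $p$ in $E$ (to force each of them above $p$ in similarity).
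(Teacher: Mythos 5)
Your proof is correct and follows essentially the same route as the paper's: both count the $u-1$ pairs preceding $p$ in $E$ (shown to be more similar than $p$ via the same domination argument --- otherwise the $k$ witnesses of $p$'s irrelevance, together with $p$ itself, would make such a pair irrelevant, contradicting that $p$ is the \emph{first} irrelevant pair in $E$) plus the $k$ pairs that witness $p$'s irrelevance, and both observe that the two groups are disjoint because they lie on opposite sides of $e_p$. The paper discharges the end-time ties explicitly via $E$'s ascending-similarity tie-break (no pair preceding $p$ in $E$ can end exactly at $e_p$), which is precisely the routine boundary case you deferred, and it is handled by the same argument you gave.
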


\begin{proof} 
  By contradiction. Let $p=E[u]$ be the first irrelevant pair in $E$ and assume $p=S[v]$, $v\leq u+k-1$. The end time of all irrelevant pairs $p'$ is $e'\geq e_p$. Since $p=E[u]$, there are $u-1$ pairs that end before $p$.  None of these pairs can end at time $e_p$ since we order ties in $E$ by ascending similarity, i.e., irrelevant pairs precede relevant pairs. All $u-1$ pairs that end before $e_p$ must be more similar than any $p'$, otherwise $p'$ would render them irrelevant. Further, since $p$ is irrelevant, there must be at least $k$ additional pairs that are more similar than $p$ and are still valid at time $e_p$. Thus, in total at least $u+k-1$ pairs exist in $S$ that precede $p$.
\end{proof}

With Lemma~\ref{lem:endirr} we can clean the stock as follows: We scan $E$ and check for each position $u$ if the rank of $E[u]$ in $S$ exceeds $u+k-1$: in this case, the pair is irrelevant and is removed. We repeat the procedure from  position $u$ until all pairs in $E$ are processed. Computing the rank of $E[u]$ in $S$ has complexity $O(\log |S|)$.  We avoid the logarithmic factor in our cleanup algorithm (Algorithm~\ref{alg:optcleanupstock} without gray-shaded parts) as follows: We start with $e=1$ and iterate through the pairs $E[e]$ and $S[s]$ simultaneously such that $s=e+k-1$. If pair $E[e]$ sorts behind $S[s]$ in the sort order of $S$ then the rank of $E[e]$ in $S$ is above $e+k-1$, and $E[s]$ is irrelevant. Thus we avoid computing the exact rank of $E[e]$ in $S$. The overall complexity is $O(|S| + i \log |S|)$ for removing $i$ irrelevant pairs.

\begin{examp}
    We clean the stock in Figure~\ref{fig:irrelevant}, $k=3$. Initially, $e=1$ and $s=e+k-1=3$ (topmost gray bar). $E[1]=p_1$ does not sort after $S[3]=p_3$; thus, $p_1$ is relevant. Next step $e=2$: $E[2]=p_4$, $S[4]=p_4$, $p_4$ is relevant. For $e=3$, $E[3]=p$ sorts after $S[5]=p_5$; thus, $p$ is irrelevant and is removed. We proceed until $S$ is exhausted.
\end{examp}
% 
% \begin{algorithm}[htb]
%  \SetKwInOut{Inp}{Input}
%  \SetKw{Break}{break}
%  \SetKw{Return}{return}
%  \SetKwProg{Fn}{Function}{}{end}
%  \Inp{$\stock,E$: stock and end time index, $k$}
%  \Fn{cleanup($\stock, E, k$)}{
%   \lIf{$|\stock| \le k$}{
%    \Return
%   }
%   $e \gets 1$;  $s \gets k$\;
%   \While{$s \le |S|$}{
%    \If{$E[e]>S[s]$ in sort order of $S$}{
%     $s_e \leftarrow$ position of $E[e]$ in $\stock$\;
%     remove $S[s_e]$ and $E[e]$;
%    }
%    \lElse{
%     $s\gets s+1$; $e\gets e+1$
%    }   
%   }
%  }
%  \caption{Cleanup.}
%  \label{alg:cleanupstock}
% \end{algorithm}

\paragraph{Optimized Cleanup} Cleanup can be optimized for insertion by scanning only the regions of $S$ that may contain irrelevant pairs. We identify these regions by inspecting the set of inserted pairs, $C$.

\begin{theo}
    \label{lem:optimal-cleanup}
    Let stock $S$ be minimal, $C$ a candidate set of pairs, $maxs=max_{c\in C}(sim(c))$ and $maxe=max_{c\in C}(e_c)$ the maximum similarity resp.\  end time of all pairs $c\in C$. After adding $C$ to $S$ (without removing irrelevant pairs), the following holds for all pairs $p\in S$: if $p$ is irrelevant, then $sim(p)\leq maxs$ and $e_{p}\leq maxe$.
\end{theo}  

Optimized cleanup (Algorithm~\ref{alg:optcleanupstock} including gray-shaded parts) uses Theorem~\ref{lem:optimal-cleanup} to scan only those parts of $S$ and $E$ that might store irrelevant pairs. As an example, consider the stock in Figure~\ref{fig:irrelevant} and assume that the candidates $C=\{p_4,p_5\}$ have been inserted. With $maxs=sim(p_4)$ and $maxe=e_p$ we  only need to scan $p_4,p, p_5$. The algorithm starts the scan at $s=4$ in $S$ (since $p_4=S[4]$) and $e=s-k+1=2$ in $E$, and ends after three iterations.

\begin{algorithm}[t]
 \SetKwInOut{Inp}{Input}
 \SetKwInOut{Glob}{Globals}
 \SetKw{Break}{break}
 \SetKw{Return}{return}
 \SetKwProg{Fn}{Function}{}{end}
 \Glob{$\stock,E$: binary search trees (stock, end times), $k$.}
 \HiLi\Inp{$C$: candidates pairs.}
 \Fn{cleanup(\colorbox{gray!30}{C})}{
  \lIf{$|\stock| \le k$}{
   \Return
  }
  \HiLi$s\gets$ rank of $max_{c\in C}(sim(c))$ in $S$; $e\gets s - k +1$\;
  \HiLi\If{ $s<k$}{
   $e \gets 1$;  $s \gets k$\;
  }
  \While{$s\le |S|$ \colorbox{gray!30}{$\wedge E[e] \le max_{c\in C}(sim(c))$}}{
   \If{$E[e]>S[s]$ in sort order of $S$}{
     $s_e \leftarrow$ position of $E[e]$ in $\stock$\;
     remove $S[s_e]$ and $E[e]$;
   }
   \lElse{
    $e\gets e+1$;
    $s\gets s+1$
   }
  }
 }
 \caption{\colorbox{gray!30}{Optimized} cleanup.}
 \label{alg:optcleanupstock}
\end{algorithm}

\paragraph{Insert} The insert algorithm (cf. Algorithm~\ref{alg:insertstock}) processes both the stock items and the candidates in sort order of the stock (descending similarity), and a merge-like approach is used to verify candidate pairs before they are inserted. Intuitively, we walk along the skyband boundary (gray boxes in Figure~\ref{fig:toinsert}). Assume the current vertex of the skyband boundary is $v_i$. When we insert the candidates that fall between the vertexes $v_{i-1}$ and $v_{i}$, their end times must be above the end time $t_{bound}$, i.e., the end time of $v_{i-1}$. Irrelevant candidates are never inserted, but the insertion of relevant candidate pairs may render other pairs irrelevant. Since irrelevant pairs can only appear after the current position in $S$, they will be removed as we proceed (like in the cleanup algorithm).    

\begin{algorithm}[tb]
 \SetKwInOut{Inp}{Input}
 \SetKwInOut{Glob}{Globals}
 \SetKw{Break}{break}
 \SetKw{Return}{return}
 \SetKw{Continue}{continue}
 \SetKwProg{Fn}{Function}{}{end}
 \Glob{$\stock,E$: binary search trees (stock, end times), $k$.}
 \Inp{candidate pairs $C=(c_1,\ldots,c_{|C|})$ sorted by descending similarity.}
 \Fn{insert($C$)}{
  \tcp{Special case $|S|<k$}
  $i \leftarrow \min\{\max\{k-|\stock|,0\},|C|\} + 1$\nllabel{alg:stockinsert:upk0}\;
  \lIf{$|\stock| < k$}{\nllabel{alg:stockinsert:upk}
   insert $(c_1,\ldots,c_{i-1})$ into $\stock$ and $E$
  }
  \lIf{$|\stock|\le k \wedge i -1 = |C|$}{
   \Return\nllabel{alg:stockinsert:returnearly}
  }
  \tcp{Initialize $t_{bound}$ and indices $e$, $s$}
  $s\gets \max_{s} (sim(S[s]) > \max\{sim(c) \mid c\in C\})+1$\;
  \lIf{ $s\le k$\nllabel{alg:stockinsert:initindicesbegin}}{
   $e \gets 1$;  $s \gets k$; $t_{bound}\gets \indextime$
  }
  \lElse{
    $e\gets s - k +1$; $t_{bound}\gets E[e-1]$\nllabel{alg:stockinsert:initindicesend}
  }
  \tcp{Loop over $\stock$ and $E$}
  \While{$s \le |S| \wedge E[e] \le max_{c\in C}(sim(c))$\nllabel{alg:stockinsert:mainloopbegin}}{
   \tcp{Insert relevant candidates}
   \While{$i\le|C| \wedge sim(c_i)>sim(S[s]) $\nllabel{alg:stockinsert:innerloopbegin} }{
    \lIf{$e_{c_i}> t_{bound}$}{
     insert $c_i$ into $\stock$ and $E$
    }
     $i\leftarrow i+1$;\nllabel{alg:stockinsert:innerloopend}
   }
   \If{$E[e]>S[s]$ in sort order of $S$\nllabel{alg:stockinsert:cleanupbegin}}{
     $s_e \leftarrow$ position of $E[e]$ in $\stock$\;
     remove $S[s_e]$ and $E[e]$;
   }
   \lElse{
    $t_{bound}\gets E[e]$;
    $s\gets s+1$; $e\gets e+1$\nllabel{alg:stockinsert:mainloopend}
   }
  }
  \tcp{Insert remaining candidates}
  \While{$i\le|C|$\nllabel{alg:stockinsert:reminsertbegin}}{
   \If{$e_{c_i}>E[e-1]$}{
    insert $c_i$ into $S$ and $E$; $e\gets e+1$\;
   }
   $i\gets i+1$\nllabel{alg:stockinsert:reminsertend}\;
  }
 }
 \caption{Insert into stock.}
 \label{alg:insertstock}
\end{algorithm}

\begin{figure}[tb]
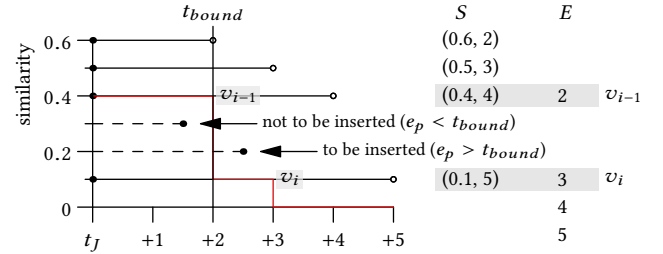

 \centering
  \myasygraphics{intervals/insert}
 \caption{To insert or not to insert.}
 \label{fig:toinsert}
\end{figure}

Lines \ref{alg:stockinsert:upk0}--\ref{alg:stockinsert:returnearly} deal with the special case $|S|<k$. 
%
% We insert the first $k-|S|$ pairs from $C$ (in sort order of $S$); if $|C|<k-|S|$, $|C|$ pairs are inserted and we are done (lines \ref{alg:stockinsert:upk0}--\ref{alg:stockinsert:returnearly}).
%
Lines \ref{alg:stockinsert:initindicesbegin}--\ref{alg:stockinsert:initindicesend} (similar to the cleanup algorithm) initialize end time threshold $t_{bound}$ and the positions $s$, $e$: $s$ is the rank of the first candidate in the sort order of $S$ (in a stock $S\cup\{c_1\}$); $e$ is aligned such that $(E[e], S[s])$ defines a skyband boundary vertex (gray bars in the figure). If the resulting $s$ is smaller than $k$, $s$ is initialized to $k$ and $e$ to 1. 

In the next step, the algorithm loops over $S$ and $E$ (lines \ref{alg:stockinsert:mainloopbegin}--\ref{alg:stockinsert:mainloopend}). In the inner loop, the relevant candidates that are more similar than $S[s]$ are inserted (lines \ref{alg:stockinsert:innerloopbegin}--\ref{alg:stockinsert:innerloopend}). Note that a candidate $c_i$ is inserted at position $s$, so $c_i$ becomes $S[s]$, and the loop exits after the first insertion (as $sim(c_{i+1})<sim(c_i)=sim(S[s])$). The relevance of a candidate is determined using the end time threshold $t_{bound}$ as illustrated in Figure~\ref{fig:toinsert}. The main loop proceeds like the cleanup algorithm (lines \ref{alg:stockinsert:cleanupbegin}--\ref{alg:stockinsert:mainloopend}), except that also $t_{bound}$ is updated.

After scanning the whole skyband boundary, there may still be candidates left  (lines \ref{alg:stockinsert:reminsertbegin}--\ref{alg:stockinsert:reminsertend}). This is the case for candidate pairs that are less similar than the least similar pair in $S$. Some of these pairs may be irrelevant. The end time for this check is the last end time in the skyband boundary, $E[e-1]$.

The complexity of insert depends on the sizes of $\stock$ and $C$. Inserting or deleting a pair takes $O(log |\stock|)$. Potentially each candidate pair has to be inserted, and each pair from $\stock$ has to be removed, yielding a worst-case complexity of $O((|\stock|+|C|)\log (|\stock|+|C|))$.

\section{Experiments}
\label{sec:experiments}

\subsection{Experimental Setting}

\paragraph{Setup} We conduct the experiments on an 8-core Intel Xeon E5-2630 v3 CPUs with 2.4~Ghz, 96 GB of RAM, and 20~MB cache (shared across cores), running Debian~9. Our code is written in C++ and is compiled with GCC using the -O3 option.

\paragraph{Algorithms}
We compare SWOOP with the following algorithms:

\begin{itemize}[noitemsep,topsep=0pt,parsep=0pt,partopsep=0pt]
\item SCase: State of the art for top-$k$ joins over streams~\cite{Shen2014}.
\item Static: State of the art for top-$k$ joins on static collections of sets~\cite{conf/icde/XiaoWLS09}; we adapt the algorithm to streams by reevaluating the top-$k$ join each time the sliding window changes.
\item Base: Baseline algorithm as presented in Section~\ref{sec:indbaseline}.
\item Static: Whenever a new set arrives, we run the top-$k$ set similarity join algorithm by Xiao et al.\cite{conf/icde/XiaoWLS09} to compute the top-$k$ from scratch.
\end{itemize}

 % We were unable to obtain the original source code of SCase. 
 
 We implemented all algorithms  in C++\footnote{Source code will be published.}
 using data structures that are available from STL and
 Boost\footnote{\url{http://www.boost.org/}}.  For the binary search trees
 $\stock$ and $E$ in SWOOP, we use the Boost Multiindex container. We define one
 Multiindex structure that stores the stock $\stock$ and provide two indices
 (for $\stock$ and $E$) on this container.

\paragraph{Datasets} In our empirical evaluation, we use five data streams with different characteristics. Table~\ref{tab:datasets} shows the stream length (number of sets), the average set size, and the size of the token universe (number of distinct tokens) for each of the streams.

{TWEET.} Geocoded tweets collected
at Daisy\footnote{\url{http://www.daisy.aau.dk/}}
from February to April 2017. A tweet is a set of words with the posting time as a timestamp.

{DBLP.} Articles from DBLP\footnote{\url{http://dblp.uni-trier.de/}}~\cite{DBLP:journals/pvldb/Ley09}. A set is a publication and the tokens correspond to the words in the authors and title fields. The timestamp is the modification date from DBLP's XML file.

{FLICKR.} Photo meta-data\footnote{Provided by Bouros et al.~\cite{boge12}.}. A set consists of tokens from the tag or title text describing a photo. The timestamps are assigned randomly between 0 and 10,000 seconds. 

{ENRON.} E-mail data. A set is formed by the words in the subject and body fields, and the timestamp is defined by the send time.

{INDUSTRY.} Workflow instances from an ERP system. A set consists of pairs of subsequent workflow activities, and the timestamp is that of the last activity in the workflow.

\begin{table}[t]
\centering\small
\begin{tabular}{c|c|c|c}
Dataset & steam length & avg.\ set size & universe size\\\hline
TWEET & $3.4 \cdot 10^7$ & 13.44 & $3.7\cdot 10^7$\\\hline
FLICKR& $1.2 \cdot 10^6$ & 10.05 & $8.1\cdot 10^5$\\\hline
DBLP& $5.5\cdot 10^6$ & 12.10 & $1.7\cdot 10^6$\\\hline
ENRON& $2.5\cdot 10^5$ & 302.2 & $7.3\cdot 10^5$\\\hline
INDUSTRY& $4.9\cdot 10^7$ & 13.07 & $1.1\cdot 10^4$\\
\end{tabular}
\caption{Dataset statistics.}
\label{tab:datasets}
\end{table}

\paragraph{Measures}
The \emph{average window size} $\avgwin$ is the average number of sets in sliding window $W$, which is controlled by the duration $w$ of sliding window $W$.  
% We adjust the window size $w$, which is a time duration, to the average window size $\avgwin$, such that for a window with a size of 10 time units, there are \emph{on average} 10 sets in the window.  This makes it easier to interpret the results.

\emph{Pre-candidates} are the set pairs that must be formed when a new set arrives in the stream. In Base and SCase, a new set will form a pre-candidate with each set in the sliding window. In SWOOP and Static, the number of pre-candidates is the number of processed index list items.
\emph{Candidates} are the pre-candidates that are sent to the stock for insertion. Base sends all pre-candidates (with similarity larger than zero) to the stock. SWOOP and SCase filter the pre-candidates using a lower bound. Static does not use a stock and recomputes the join result for each window position.

The \emph{set rate} is the average number of processed sets per second and thus measures the performance of an algorithm. We map string tokens to integers as discussed in Section~\ref{sec:opttoken}; this process is identical for all algorithms and is not considered in the set rate.
The \emph{latency} is the time difference between the appearance of a set in the stream and the update to the top-$k$ result. It includes candidate generation, stock update, and potential waiting times in the input queue. % before the set can be processed.

%\paragraph{Outline} We first provide an overview of the performance characteristics of SWOOP and its competitors. Afterwards, we drill down on the different contributions of SWOOP

\subsection{Scalability}

We evaluate the scalability of SWOOP and its competitors. We vary the
window size and the result size $k$, and we use all datasets.
Figure~\ref{fig:overview} shows the results. Missing values for an algorithm indicate that the stream could not be processed within 20k seconds (FLICKR, ENRON) resp.\ 200k seconds (other datasets).

\paragraph{Scalability in the window size} We measure the set rate for different window sizes $\avgwin$. For a small window size close to $k$, even Base performs well. For larger windows, however, the set rates of Base, SCase, and Static decrease sharply. When we increase the window size  by a factor of 10, the set rate of SCase decreases by a factor of 3.1 to 8.7, the set rate of Base by a factor of 15 to 76, the set rate of Static by a factor of up to 6.7. SWOOP clearly outperforms all other approaches and scales well with the window size. In fact, for $k=10$ the performance between $\avgwin=10^2$ and the largest window tested on the respective dataset decreased by less than a factor two; for a larger result size of $k=10^3$, we observe a similar behavior starting with $\avgwin=10^3$.

The DBLP stream is particularly challenging due to its skewed distribution of the timestamps. We show the results for varying window durations $w$ (the average window size is not meaningful for DBLP since it is heavily skewed). Base and Static run into a timeout even for the smallest window duration of $w=1$ day. SCase is slower than SWOOP by two to three orders of magnitude, and only SWOOP is capable of processing the DBLP stream for all window sizes  without timeouts. The set rate of SWOOP is affected little by the window size. 

\paragraph{Scalability in k} In Figure~\ref{fig:flickrres-scal-k}, we vary the result size $k$ for a fixed window size $\avgwin=10^3$ on the FLICKR stream, which all algorithms can process for $k=10$. The set rate of Base is low, but does not depend on $k$. This is because Base does not leverage lower $k$ values to decrease the stock size or reduce the number of candidates. SCase, Static, and SWOOP run faster for smaller $k$ values; SWOOP is consistently faster than SCase and Static by more than an order of magnitude.

\paragraph{Performance analysis} We analyze the performance advantage of SWOOP over its competitors in detail.

\emph{(1) Pre-candidates.} Figure~\ref{fig:tweetres-precand} shows the number of pre-candidates on the TWEET stream. Base and SCase form a pre-candidate with each set in the sliding window, which leads to a large number of pre-candidates. SWOOP uses the candidate index to reduce the number of pre-candidates that must be considered. The candidate index is highly effective: SWOOP considers only a small fraction of the pairs that its competitors must process, and the number of pre-candidates grows slowly with the window size. This explains SWOOP's scalability to large windows.

\emph{(2) Candidates.} In Figure~\ref{fig:tweetres-candidates} we measure the number of candidates. Base cannot prune any pre-candidates, and all pre-candidates are added to the stock. SCase and SWOOP both maintain the same pairs in the stock, so the number of candidates is the same. While SCase recomputes the stock from scratch for each new set in the stream, SWOOP updates the stock incrementally.

 \begin{figure*}[t!]
{\centering
  \myasygraphics[valign=t]{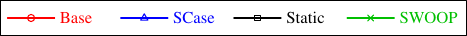}
  \subfigure[Set rate, INDUSTRY, $k=10$.]{
  \myasygraphics[valign=t]{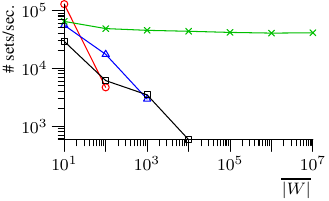}
  \label{fig:industryres-setspersecond}
  }%
  \subfigure[Set rate, ENRON, $k=10$.]{
  \myasygraphics[valign=t]{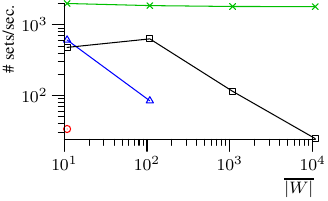}
  \label{fig:enronres-setspersecond}
  }%
  \subfigure[Set rate, DBLP, $k=10$.]{
  \myasygraphics[valign=t]{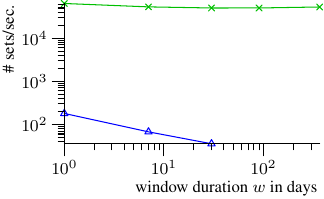}
  \label{fig:dblpres-setspersecond}
  }
  \subfigure[Set rate, FLICKR, $k=10$.]{
  \myasygraphics[valign=t]{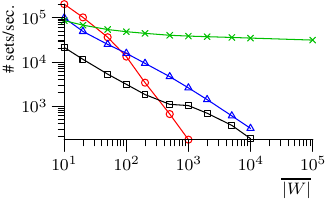}
  \label{fig:flickrres}
  }%
  \subfigure[Set rate, FLICKR, $k=10^3$.]{
  \myasygraphics[valign=t]{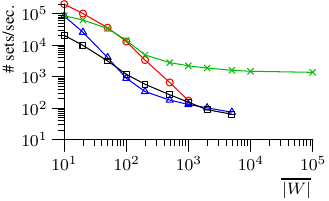}
  \label{fig:flickrres-large-k}
  }%
  \subfigure[Set rate, FLICKR, $\avgwin=10^3$.]{
  \myasygraphics[valign=t]{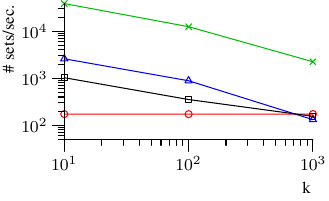}
  \label{fig:flickrres-scal-k}
  }
  \subfigure[Set rate, TWEET, $k=10$.]{
  \myasygraphics[valign=t]{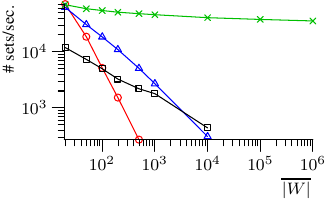}
  \label{fig:tweetres-setspersecond}
  }%
  \subfigure[Pre-candidates, TWEET, $k=10$.]{
  \myasygraphics[valign=t]{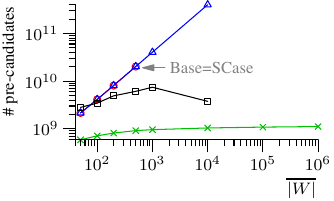}
  \label{fig:tweetres-precand}
  }%
  \subfigure[Candidates, TWEET, $k=10$.]{
  \myasygraphics[valign=t]{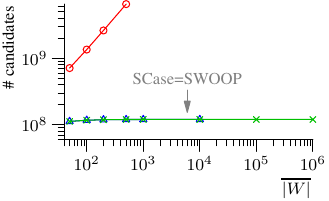}
  %\caption{TWEET candidates: $k=10$}
  \label{fig:tweetres-candidates}
  }%
  \caption{Scalability: Set Rate, Pre-Candidates, Candidates.}%
  \label{fig:overview}
}%
 \end{figure*}

\emph{(3) Stock maintainance. } We evaluate the effect of the incremental stock maintenance vs.\ the candidate index in Figure~\ref{fig:hybrids}. To this end, we implement a version of SWOOP without a candidate index (labeled \emph{no-index}) and another version that recomputes the stock from scratch like SCase, i.e., it does not support incremental updates (labeled \emph{no-increment}). 

Clearly, both the candidate index and the incremental stock maintenance contribute to the performance of SWOOP. For large $k$, the bounds used by the candidate index are looser, which leads to more pre-candidates and reduced effectiveness (cf.\ Figure~\ref{fig:flickr-setspersecond-scal-k-hybrids}). The incremental index update, on the other hand, gains more for larger values of $k$ and outperforms the no-increment variant by up to an order of magnitude. When the window size grows (cf.\ Figure~\ref{fig:tweet-setspersecond-hybrids}), removing the candidate index leads to poor performance; the incremental index update outperforms no-increment, and the gain is almost independent of the window size.

Summarizing, the performance of SWOOP is mainly due to (a) the candidate index, which controls the number of pre-candidates as the window size grows, and (b) the incremental stock maintenance, which is up to an order of magnitude faster than recomputing the stock from scratch.

\begin{figure}[ht]
 \subfigure[Set rate, FLICKR, $\avgwin=10^3$.]{
  \myasygraphics{plots/curve-flickr-scal-k-setspersecond-hybrids}
  \label{fig:flickr-setspersecond-scal-k-hybrids}
 }%
 \subfigure[Set rate, TWEET,  $k=10$.]{
  \myasygraphics{plots/curve-tweet-setspersecond-hybrids}
  \label{fig:tweet-setspersecond-hybrids}
 }%
 \caption{SWOOP without a candidate index (no-index) and without incremental stock maintenance (no-increment).}
 \label{fig:hybrids}
\end{figure}

\emph{(4) Static algorithm.} Static does not maintain a stock. Instead, the join result is computed from scratch whenever the sliding window changes. This approach does not scale to large window sizes since the join time depends on the number of sets in the window. 

Note that Static cannot process new sets in batches: Each new set that enters the window may change all values of the top-$k$ result. Therefore, an approximation that processes batches of size $b>1$ ($b=1$ is the exact algorithm) may introduce a large error. The error rate, measured as the ratio between windows with the correct vs.\ windows with an incorrect top-$k$ results, is $O(1-1/b)$. The error is also high in practice. For example, the error is 65\% for batch size $b=100$ on ENRON ($|W|=1000$, $k=10$); more than 75\% of the incorrect top-$k$ lists differ by more than one element.

\subsection{Latency}  

To study the latency of SWOOP, we modify the timestamps in the TWEET dataset in order to produce a stream with a constant number of sets per second. We load SWOOP with 80\% of the average stream rate for the respective window size and measure the latency. The latencies are small: For $\avgwin=10^4$ ($4.72\cdot 10^4$ sets/second), the maximum latency is 0.25s with a maximum queue of 12,015 sets, and for  $\avgwin=10^6$ ($3.62\cdot 10^4$ sets/second), the maximum latency is 0.03s with a maximum queue length of 1365 sets. Interestingly, the latency is lower for larger windows. We attribute this effect to the skyband lower bound, which is looser for small windows (and fewer pairs in the stock). This may lead to more pre-candidates for individual sets. In fact, the maximum processing time  (candidate generation plus stock update) of a set is 0.04s for $\avgwin=10^6$ and 0.10s for $\avgwin=10^4$. This effect is limited to individual sets and does no show in the overall number of pre-candidates (cf.\ Figure~\ref{fig:tweetres-precand}).

\subsection{Optimized Token Processing Order}

We measure the effect of the processing order of the index lists during candidate generation in SWOOP.

% This is relevant only to SWOOP since SCase and Base do not use an index, and the processing order in Static cannot be changed.

In Section~\ref{sec:opttoken}, we propose to process the  index lists in ascending order of their length. We compare SWOOP, which uses this optimization, to SWOOP-noopt that uses the token order established based on the first appearance of a token.

We run the experiment on all datasets. For TWEET, FLICKR, and ENRON, we see almost no runtime difference, indicating that the token order is a good estimate of the real frequency in the stream. The picture is different for DBLP:
Figure \ref{fig:dblpres-setspersecond-noopt} shows that SWOOP can process the DBLP stream at a rate between 36 and 83 times faster than SWOOP-noopt.  The reason is the skew in the DBLP dataset. First, the sets are received in the stream at a very irregular rate, such that the window size $|W|$ varies between 0 and 338,199 for $w=1$~day (cf.\ Figure~\ref{fig:dblp-windowsize}). For large window sizes, the index lists grow long, and a poor list order has major effects on the performance. Second, the tokens 'Page' and 'Home' are only introduced at the positions 2,018 and 9,764, respectively.
However, these tokens become very frequent later (between 10\% and 50\% for most of the stream), as Figure~\ref{fig:dblp-tokenfrequency} shows (due to high correlation, the blue curve for 'Page' almost exactly tracks the red curve of 'Home'). As a result,  these tokens get assigned token numbers for infrequent tokens. Even worse, the largest frequency (almost 100\%) of these tokens occurs during the spikes in the window size, leading to very large numbers of pre-candidates  (cf.\ Figure~\ref{fig:dblpres-precands}). 

This offers empirical evidence that the optimization of the token order is relevant for difficult streams that are highly skewed.

\begin{figure*}[t!]
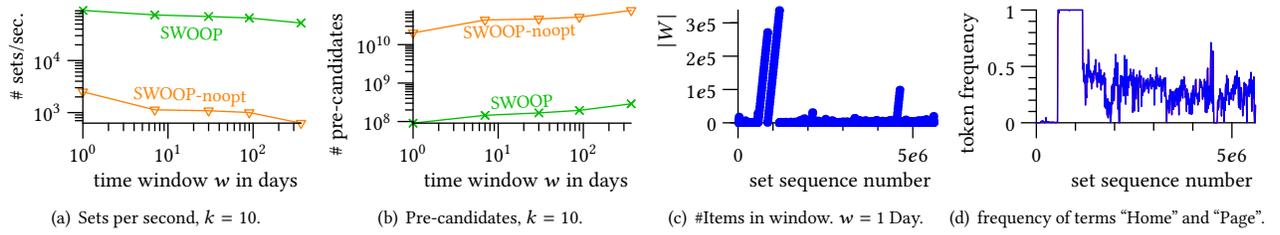

 \subfigure[Set rate, $k=10$.]{
  \myasygraphics{plots/curve-dblp-setspersecond-noopt}
  \label{fig:dblpres-setspersecond-noopt}
 }
 \subfigure[Pre-candidates, $k=10$.]{
  \myasygraphics{plots/curve-dblp-precand}
  \label{fig:dblpres-precands}
 }
 \subfigure[Window size $|W|$ for window duration $w=1$ day.]{
  \myasygraphics{windowitems/windowitems-dblp}
  \label{fig:dblp-windowsize}
 }
 \subfigure[Frequency of terms ``Home'' and ``Page''.]{
  \myasygraphics{tokenfrequency/tokenfrequency-dblp}
  \label{fig:dblp-tokenfrequency}
 }
 \label{fig:dblp-all}
 \caption{Optimized Token Processing Order (DBLP).}
\end{figure*}

%SWOOP+ is much more robust than SWOOP while its overhead is very low. On the DBLP dataset, SWOOP+ is more than a magnitude faster than SWOOP on all data points (Figure TODO).  An analysis of DBLP shows that at some point, a large number of sets of the form \emph{Firstname Lastname Home Page} was added. The overall count of these sets is 1,899,740, about 34\% of the dataset. The token \emph{Home} first occurs in the set with sequence number 9,307, while the token Page first occurs in the set with sequence number 2,019.  Many first names and last names occur earlier, such that they are sorted after  \emph{Home} and \emph{Page}. Clearly, the inverted lists for \emph{Home} and \emph{Page} are long. However, as they occur relatively late in the stream, they are often sorted before the corresponding Firstname and/or Lastname. Therefore, the first lookup in SWOOP returns a list that is long with an upper bound of 1, such that the whole list has to be processed.

\subsection{Stock Size}
\label{sec:stocksize}

We study the maximum stock size for SWOOP, SCase, and Base. Specifically, we consider the maximum number of pairs that were stored in the stock during the processing of a particular stream. The stock size of Base is quadratic in the window size $\avgwin$, as it stores all pairs  (with non-zero overlap) in the window. The stocks of both SWOOP and SCase are minimal and of size $O(k\cdot |S|)$ in the worst case. 
%(The difference between them lies in the stock maintenance, which is evaluated later.)  
Figure~\ref{fig:flickr-stocksize} shows the stock size for increasing window sizes $\avgwin$ and increasing values of $k$. As expected, the stock size of Base grows fast with the window size. Interestingly, the size of the minimal stock of SWOOP and SCase grows much slower than the worst case, indicated by the dotted lines. The stock size of Base is independent of $k$, as it stores all pairs (with non-zero overlap) --- see Figure~\ref{fig:flickrres-scal-k-stocksize}. The minimal stock of SWOOP and SCase is well below the worst case and also grows slowly: At $k=10$, the maximum stock size is $1.5\cdot10^2$, while at $k=1000$, it is $6\cdot10^3$, which is substantially below the worst case minimal stock size.
These results are in line with previous findings~\cite{Shen2014}, where the asymptotic behavior of the \emph{expected} stock size is shown to be $O(k\cdot \log(\avgwin/k))$.
Overall, the advantage of maintaining a minimal stock is clearly supported by our experiments.

\begin{figure}[ht]
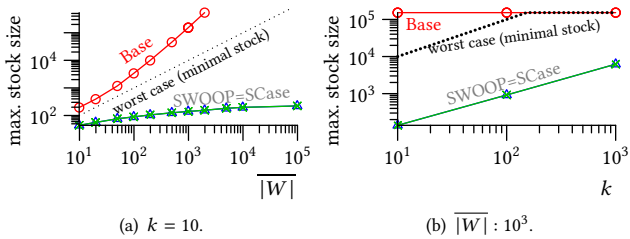
%
  \subfigure[$k=10$.]{%
  \myasygraphics{plots/curve-flickr-stocksize}
  \label{fig:flickrres-stocksize}
  }%
  \subfigure[$\avgwin: 10^3$.]{%
  \myasygraphics{plots/curve-flickr-scal-k-stocksize}
  \label{fig:flickrres-scal-k-stocksize}
  }
  %\subfigure[Stocksize, $k=10^3$.]{
  %\myasygraphics{plots/curve-flickr-large-k-stocksize}
  %\label{fig:flickrres-large-k}
  %}
  \caption{Maximum stock size in $k$ and $\avgwin$ (FLICKR).}
  \label{fig:flickr-stocksize}
 \end{figure}

\section{Related Work}
\label{sec:relwork}

%\todo{cite Helmer's sweeping-based temporal aggregation \cite{DBLP:conf/ssd/PiatovH17}}

%\todo{cite Josie~\cite{zhu-sigmod-2019_josie-topk} and TASM~\cite{kocher-sigmod-2019_tasm-topk}. An algorithm that uses top-$k$ for each newly arriving set will correctly update the current top-k, but does not produce enough pairs to solve the problem of result expiration.}

Several proposals exist for the threshold-based set similarity joins on static data~\cite{deng-pvldb-2015,mann-tr-2015,wang-pvldb-2017}.  Deng et al.~\cite{deng-pvldb-2015} leverage  the pigeonhole principle on set partitions to prune candidates. A particularly successful concept is the so-called prefix filter~\cite{SCVGRK06}, which has been exploited in many set join algorithms~\cite{RBYMRS07,boge12,maau14,LARTH11,JWGLJF12,wang-pvldb-2017,CXWWXLJXYGW11}. Neither set partitioning nor prefix filtering can be applied in our top-$k$ settings as they require a fixed threshold. Wang et al.~\cite{wang-sigmod-2016_local-set-similarity} study a threshold-based similarity join on two windows that slide over a query and a document, respectively; a window defines a fixed-length set. In our setting, the sliding window covers all valid sets in the stream at a specific point in time.

Morales et al.~\cite{de2016streaming} consider sets that arrive in a stream. Their join computes all pairs of sets that are more similar than a user-defined threshold. They support an extended Cosine similarity measure that also considers the age of pairs using a pre-defined time-decay parameter. Their algorithm maintains all pairs that are more similar than the pre-defined threshold. This algorithm cannot be applied in our setting because (i) the time-decay cannot be modified to simulate a sliding window, and (ii) in order to enable top-$k$ functionality, the algorithm must support changing the threshold whenever a set enters or exits the window such that exactly $k$ pairs are maintained, which it does not.

Recent works propose top-$k$ search over static collections of sets~\cite{zhu-sigmod-2019_josie-topk,kocher-sigmod-2019_tasm-topk}, whereas we study the problem of top-$k$ joins over streams.
Xiao et al.~\cite{conf/icde/XiaoWLS09} consider the top-$k$ join scenario in a static setting where all sets are known up front. The processing is by token, not by set. The tokens are processed by decreasing positional upper bound. The algorithm is not applicable to our problem, unless we were to run the algorithm whenever window $W$ changes. We compare empirically with this approach.

Shen et al.~\cite{Shen2014} introduce SCase, a generic framework for computing the top-$k$ most similar pairs over sliding windows of object streams. The similarity function is supplied by the user, and no optimizations specific to sets are included. SCase uses four data structures for maintaining the stock: binary trees for the stock (i) sorted by similarity and (ii) sorted by end time, and (iii) for storing the skyband boundary; and (iv) a heap for the reconstruction of the three trees. We only need the first two data structures. We further require fewer operations and less memory, as we maintain the data structures incrementally rather than reconstructing them for each new set on a stream. We conduct a detailed empirical comparison with this approach.

A number of studies (e.g.,~\cite{Pripuzic2015,mouratidis2006continuous}) compute top-$k$ queries over streams of objects with a fixed score. All objects have the same lifetime, which is determined by a sliding window. In our setting, the lifetime of an object (pair of sets) is determined by the lifetime of two sets and varies between objects, which poses additional challenges compared to what is supported by these algorithms.

% cannot be applied in our setting, while our stock  can be applied in these settings.

Ilyas et al.~\cite{DBLP:journals/vldb/IlyasAE04} compute top-$k$ join queries in relational databases. Tuples are joined on equality and are ranked based on the rank of the joined tuples. Furthermore, the algorithm requires static input. In our setting, the sets (tuples) have no rank associated with them. We compute the rank solely on the pairs of sets (joined tuples). Our join result changes based on the content of the sliding window. Therefore, this algorithm cannot be applied in our setting.

\section{Conclusions}
\label{sec:conclusion}
We presented a novel algorithm for continuous top-$k$ similarity joins over streams of sets. We introduced the notion of well-behaved similarity function to characterize the class of supported similarity functions. Our algorithm  integrates new set-based optimizations and a novel, incremental technique to maintain the join result. An extensive empirical comparison with the state-of-the-art algorithm SCase and a baseline offer evidence that the new algorithm is capable of outperforming its predecessors by up to three orders of magnitude.

% CSJ: If we have space, we can add a sentence about future work.

% The following two commands are all you need in the
% initial runs of your .tex file to
% produce the bibliography for the citations in your paper.
%\newpage
\balance
\bibliography{mybib}
\bibliographystyle{abbrv}
%\bibliographystyle{abbrv}
% You must have a proper ".bib" file
%  and remember to run:
% latex bibtex latex latex
% to resolve all references

% \subsection{References}
% Generated by bibtex from your ~.bib file.  Run latex,
% then bibtex, then latex twice (to resolve references).

%APPENDIX is optional.
% ****************** APPENDIX **************************************
%\input{appendix}

\end{document}